\newtheorem{theorem}{Theorem}
\newtheorem{definition}[theorem]{Definition}
\newtheorem{lemma}[theorem]{Lemma}
\newtheorem{claim}[theorem]{Claim}
\newcommand{\ket}[1]{\vert #1 \rangle}
\newcommand{\bra}[1]{\langle #1 \vert}
\newcommand{\qed}{$\square$}
\newenvironment{proof}{%
  \vspace{2ex}
  \noindent{\it Proof.\ }}{%
  \hspace*{\fill}\qed
  \vspace{2ex}}
\title{
Quantum Algorithms for Finding Constant-sized Sub-hypergraphs 
  }
\author{
{\large \hspace*{-1ex} $\text{Fran{\c c}ois Le Gall}^{\ast}$
 \hspace*{-1ex}}\\
\and
{\large \hspace*{-1ex} $\text{Harumichi Nishimura}^{\dagger}$
 \hspace*{-1ex}}\\
\and
{\large \hspace*{-1ex} $\text{Seiichiro Tani}^{\ddagger}$
 \hspace*{-1ex}}\\
}
\date{}
\begin{document}
\maketitle
\vspace*{-5mm}
\begin{center}
  ${}^{\ast}$Graduate School of Information Science and Technology, The University of Tokyo\\
{\tt legall@is.s.u-tokyo.ac.jp}\\
  ${}^{\dagger}$Graduate School of Informatics, Nagoya University\\
{\tt hnishimura@is.nagoya-u.ac.jp}\\
${}^{\ddagger}$NTT Communication Science Laboratories, NTT Corporation\\
{\tt tani.seiichiro@lab.ntt.co.jp}
\end{center}
\begin{abstract}
We develop a general framework to construct quantum algorithms 
that detect if a $3$-uniform hypergraph given as input contains a 
sub-hypergraph isomorphic to a prespecified constant-sized hypergraph. 
This framework is based on the concept of nested quantum walks 
recently proposed by Jeffery, Kothari and Magniez [SODA'13], and
extends the methodology designed by 
Lee, Magniez and Santha [SODA'13] for similar problems over graphs.
As applications, we obtain a quantum algorithm
for finding a $4$-clique in a $3$-uniform hypergraph on $n$ vertices
with query complexity $O(n^{1.883})$,
and a quantum algorithm for determining if a~{ternary operator} over a set of size $n$ is associative
with query complexity $O(n^{2.113})$. 
\end{abstract}


\sloppy

\section{Introduction}

Quantum query complexity is a model of quantum computation, in which
the cost of computing a function is measured by the number of queries
that are made to the input given as a black-box.
In this model, it was exhibited in the early stage of quantum computing research
that there exist  quantum algorithms superior to the classical counterparts, 
such as Deutsch and Jozsa's algorithm, Simon and Shor's period finding algorithms, and Grover's search algorithm. 
Extensive studies following them have invented
a lot of powerful upper bound (i.e., algorithmic) techniques such as variations/generalizations of Grover's search algorithm or quantum walks. 
Although these techniques give tight bounds for many problems, 
there are still quite a few  cases for which no tight bounds are known.  Intensively studied problems among them are the $k$-distinctness problem~\cite{Amb07,BelovsFOCS12,Belovs+ICALP13} and the triangle finding problem~\cite{BelovsSTOC12,Buhrman+SIAM05,Jeffery+SODA13,Lee+SODA13,MSS07}.

A recent breakthrough is the concept of learning graph
introduced by Belovs~\cite{BelovsSTOC12}.
This concept enables one to easily find a special form of feasible solutions
to the minimization form (i.e., the dual form) of the general adversary bound~\cite{HLS07,ReichardtFOCS09},
and makes possible to detour the need of solving a semidefinite program of
exponential size to find a non-trivial upper bound. Indeed, Belovs~\cite{BelovsSTOC12} improved the long-standing $O(n^{13/10})$
upper bound~\cite{MSS07} of the triangle finding problem to $O(n^{35/27})$. 
His idea was generalized by Lee, Magniez and Santha~\cite{LMS12}
and Zhu~\cite{Zhu12} to obtain a quantum algorithm that finds
a constant-sized subgraph with complexity $o(n^{2-2/k})$, 
improving the previous best bound $O(n^{2-2/k})$~\cite{MSS07},
where $k$ is the size of the subgraph.
Subsequently, Lee, Magniez and Santha~\cite{Lee+SODA13} constructed 
a triangle finding algorithm with quantum query complexity
$O(n^{9/7})$. This bound was later shown by Belovs and Rosmanis~\cite{Belovs+CCC13} 
to be the best possible bound attained by the family of quantum algorithms 
whose complexities depend 
only on the index set of 1-certificates.
Ref.~\cite{Lee+SODA13} also gave a framework of 
quantum algorithms for finding a constant-sized subgraph,
based on which they showed that associativity testing (testing if a binary operator 
over a domain of size~$n$ is associative) 
has quantum query complexity~$O(n^{10/7})$.

Recently, Jeffery, Kothari and Magniez~\cite{Jeffery+SODA13} cast
the idea of the above triangle finding algorithms into the framework of quantum walks
(called nested quantum walks) by recursively performing the quantum walk algorithm given
by Magniez, Nayak, Roland and Santha~\cite{MNRS11} (which extended two seminal works 
for quantum walk algorithms, Szegedy's algorithm~\cite{SzegedyFOCS04} based on Markov chain and Ambainis' algorithm~\cite{Amb07} for $k$-element distinctness).
Indeed, they presented two quantum-walk-based triangle finding algorithms
of complexities $\tilde {O}(n^{35/27})$ and $\tilde{O}(n^{9/7})$, respectively.
The nested quantum walk framework was further employed in~\cite{Belovs+ICALP13} (but in a different way from~\cite{Jeffery+SODA13})
to obtain $\tilde{O}(n^{5/7})$ complexity for the $3$-distinctness problem.
This achieves the best known upper bound (up to poly-logarithmic factors), 
which was first obtained with the learning-graph-based approach~\cite{BelovsFOCS12}.

The triangle finding problem also plays a central role in several areas beside query complexity,
and it has been recently discovered that faster algorithms for (weighted versions of) 
triangle finding would lead to faster algorithms for
matrix multiplication~\cite{LeGallSODA12,Williams+FOCS10}, 
the 3SUM problem \cite{Williams+STOC09}, and
for Max-2SAT \cite{Williams05,WilliamsPhD07}. In particular,
Max-2SAT over $n$ variables is reducible
to finding a triangle with maximum weight over $O(2^{n/3})$ vertices;
in this context, although the final goal is a \emph{time-efficient} classical or quantum algorithm that finds a triangle with maximum weight, studying triangle finding in the query complexity model is a first step toward this goal.

{\bf Our results.} 
Along this line of research, this paper studies the problem of 
finding a 4-clique (i.e., the complete $3$-uniform hypergraph with $4$ vertices) in a $3$-uniform hypergraph,
a natural generalization of finding a triangle in an ordinary graph (i.e.,
a $2$-uniform hypergraph).
Our initial motivation comes from the complexity-theoretic importance of the problem.
Indeed, while it is now well-known that  
Max-3SAT over~$n$ variables is reducible to finding a $4$-clique 
with maximum weight in a 3-uniform hypergraph 
of $O(2^{n/4})$ vertices (the reduction is similar to the reduction from Max-2SAT
to triangle finding mentioned above; we refer to \cite{WilliamsPhD07} for details), 
no efficient classical algorithm for $4$-clique 
finding has been discovered so far. Constructing query-efficient algorithms
for this problem can be seen as a first step to investigate the possibility of 
faster (in the time complexity setting) classical or quantum algorithms for Max-3SAT.

Concretely, and more generally,
this paper gives a framework based on quantum walks
for finding any constant-sized sub-hypergraph in a $3$-uniform hypergraph 
({Theorem \ref{th:main}}). 
This is an extension of the learning-graph-based algorithm
in~\cite{Lee+SODA13} to the hypergraph case in terms of 
a nested quantum walk~\cite{Jeffery+SODA13}.
We illustrate this methodology by constructing a quantum algorithm that finds
a $4$-clique in a $3$-uniform hypergraph\footnote{We stress that, while this quantum algorithm can also be used to 
find with the same complexity a $4$-clique of maximal weight, this does not currently lead to a better algorithm for Max-3SAT since our algorithm is only query-efficient.}  
with query complexity $\tilde{O}(n^{241/128})=O(n^{1.883})$,
while na\"{i}ve Grover search over the $\binom{n}{4}$ combinations of vertices
only gives $O(n^{2})$.
As another application, we also construct a quantum algorithm 
that determines if a {ternary operator} is associative using 
$\tilde{O}(n^{169/80})=O(n^{2.113})$ queries,
while na\"{i}ve Grover search needs $O(n^{2.5})$ queries. 
  
In the course of designing the quantum walk framework,
we introduce several new technical ideas (outlined below) for analyzing nested quantum walks
to cope with difficulties that do not arise 
in the $2$-uniform case (i.e., ordinary graphs),
such as the fact that 
the size of the random subset taken in an inner walk
may vary depending on the random subsets taken in outer walks.
We believe that these ideas may be applicable to various problems beyond 
sub-hypergraph finding.

Our framework is another demonstration of the power of the concept of nested quantum walks, 
and of its wide applicability. In particular, we crucially rely on the high-level description and analysis made possible by the nested quantum walk formalism 
to overcome all the technical difficulties that arise when considering $3$-uniform hypergraphs.

{\bf Technical contribution.} 
Roughly speaking, the subgraph finding algorithm by  Lee, Magniez and Santha~\cite{Lee+SODA13} works as follows.
First, for each vertex $i$ in the subgraph $H$
that we want to find, 
a random subset $V_{i}$ of 
vertices of the input graph is taken.
This subset 
$V_i$ represents a set of candidates for the vertex $i$.
Next, for each edge $(i,j)$ in the subgraph $H$, 
a random subset of 
pairs in 
$V_{i}\times V_{j}$ is taken, representing a set of candidates for the edge $(i,j)$.
The most effective feature of their algorithm
is to introduce a parameter for each ordered pair $(V_{i},V_{j})$ 
that controls the average degree of a vertex in the bipartite graph between $V_i$ and $V_j$. 
To make the algorithm efficient, 
it is crucial to keep the degree of every vertex in $V_{i}$ almost equal to the 
value specified by the parameter. For this, they carefully devise
a procedure for taking pairs from $V_i\times V_j$. 

Our basic idea is similar in that we first, for each vertex $i$ in the sub-hypergraph $H$ that we want to find, 
take a random subset $V_{i}$ of vertices in 
the input $3$-uniform hypergraph as a set of candidates for the vertex $i$ and then,
for each hyperedge $\{i,j,k\}$ of $H$, take a random subset of triples in  $V_{i}\times V_{j} \times V_{k}$.
One may think that the remaining task is to fit the pair-taking procedure
into the hypergraph case. It, however, turns out to be technically very complicated to 
generalize the pair-taking procedure from \cite{Lee+SODA13} to an efficient triple-taking procedure.
Instead we cast the idea into the nested quantum walk of Jeffery, Kothari and Magniez~\cite{Jeffery+SODA13}
and employ probabilistic arguments. More concretely, 
we introduce  a parameter that specifies the number $e_{ijk}$ of triples
to be taken from $V_{i}\times V_{j} \times V_{k}$
for each hyperedge $\{i,j,k\}$ of $H$. 
We then argue that,
for randomly chosen $e_{ijk}$ triples,
the degree of each vertex sharply concentrates around its average, 
where the degree means the number of triples including the vertex 
(in this sense,  the parameters $e_{ijk}$ play essentially the same
role as those of ``average degrees'' used in  
~\cite{Jeffery+SODA13}, but introducing $e_{ijk}$ gives a neat formulation of the algorithm and this is effective particularly in handling such complicated cases as hypergraphs). 
This makes it substantially easier to analyze the complexity of all involved quantum walks,
and enables us to completely analyze the complexity of our approach.
Unfortunately, it turns out that this approach (taking the sets $V_i$ first, and then 
$e_{ijk}$ triples from each $V_{i}\times V_{j} \times V_{k}$) does not lead to any improvement 
over the na\"{i}ve $O(n^2)$-query quantum algorithm.

Our key idea is to introduce, for each unordered pair $\{i,j\}$ of vertices in $H$,  
a parameter $f_{ij}$, and modify the approach as follows.
After randomly choosing $V_{i},V_{j}, V_{k}$, we take three random subsets
$F_{ij}\subseteq V_{i}\times V_{j}$, $F_{jk}\subseteq V_{j}\times V_{k}$, and $F_{ik}\subseteq V_{i}\times V_{k}$
of 
size $f_{ij}$, $f_{jk}$ and $f_{ik}$, 
respectively. 
We then randomly choose $e_{ijk}$ triples from 
the set $\Gamma_{ijk}=\{(u,v,w)\:|\:(u,v)\in F_{ij},(u,w)\in F_{ik}\textrm{ and }(v,w)\in F_{jk}\}$.
The difficulty here is that the size of $\Gamma_{ijk}$ varies depending on the sets
$F_{ij}$, $F_{jk}$, $F_{ik}$.
Another problem is that, after taking many quantum-walks
(i.e., performing the update operation many times),
the distribution of the set of pairs can change.
To overcome these difficulties,
we carefully define the ``marked states'' 
(i.e., ``absorbing states'') of 
each level of the nested  quantum walk:
besides requiring, as usual, 
that the set (of the form $V_{i}$, $F_{ij}$ or $\Gamma_{ijk}$) associated to a marked state should contain 
a part (i.e., a vertex, a pair of vertices or a triple of vertices) of a copy of $H$,
we also require that this set should 
\emph{satisfy certain regularity conditions}.
We then show that the associated sets almost always satisfy the regularity conditions, by using  concentration theorems 
for hypergeometric distributions.
This regularity enables us to effectively bound the complexity of our 
new approach, giving in particular 
the claimed $\tilde{O}(n^{241/128})$-query upper bound when $H$ is a 4-clique.

\section{Preliminaries}\label{section_prelim}
For any $k\ge 2$, an undirected $k$-uniform hypergraph is a pair $(V,E)$,
where $V$ is a finite set (the set of vertices), and $E$ is a set of unordered $k$-tuples of 
elements in $V$ (the set of hyperedges). 
An undirected $2$-uniform hypergraph is simply an undirected graph.

In this paper, we use the standard quantum query complexity model formulated 
in Ref.~\cite{Beals+JACM01}. 
We deal with (undirected) 3-uniform hypergraphs $G=(V,E)$ 
as input, and the operation of the black-box is given as 
the unitary mapping $|\{u,v,w\},b\rangle\mapsto |\{u,v,w\},b\oplus \chi(\{u,v,w\})\rangle$ for $b\in \{ 0,1\}$,
where the triple $\{u,v,w\}$ is the query to the black-box 
and $\chi(\{u,v,w\})$ is the answer on whether the triple is a hyperedge of $G$, 
namely,  $\chi(\{u,v,w\})=1$ if $\{u,v,w\}\in E$ and $\chi(\{u,v,w\})=0$ otherwise. 

Our algorithmic framework is based on the concept of the nested quantum walk
introduced by Jeffery, Kothari and Magniez~\cite{Jeffery+SODA13}. 
In the nested quantum walk, for each positive integer $t$,
the walk at level $t$ checks whether
the current state is marked or not
by invoking 
the walk at level $t+1$, and this is iterated recursively until 
some fixed level~$m$.
The data structure of the walk at level $t$ 
is defined so that it includes the initial state of the walk at level $t+1$,
which means that
the setup cost of the walk at level~$t\ge 2$ is zero.
Jeffery, Kothari and Magniez have shown (in Section 4.1 of~\cite{Jeffery+SODA13}) that
the overall complexity of such a walk is
\[
\tilde O\left(
\mathsf{S}+\sum_{t=1}^m \left(\prod_{r=1}^t\frac{1}{\sqrt{\varepsilon_r}}\right)\frac{1}{\sqrt{\delta_t}}\mathsf{U}_t
\right)
\]
if the checking cost at level $m$ is zero, which will be our case.
Here $\mathsf{S}$ denotes the setup cost of the whole nested walk, $\mathsf{U_t}$ denotes the cost of updating the database 
of the walk at level $t$, $\delta_t$ denotes the spectral gap of the walk at level $t$,
and $\varepsilon_r$ denotes the fraction of marked states for the walk at level $r$.
As in most quantum walk papers, we only consider quantum walks on the Johnson graphs,
where the Johnson graph $J(N,K)$ is 
a graph such that each vertex is a subset 
with size $K$ of a set with size $N$ 
and two vertices corresponding to subsets $S$ and $S'$ 
are adjacent if and only if $|S\Delta S'|=2$ 
(we denote by $S\Delta S'$ the symmetric difference 
between $S$ and $S'$). 
If the walk at level $t$ is on $J(N,K)$,
then its spectral gap $\delta_{t}$ is known to be $\Omega(1/K)$.

Consider the update operation of the walk at any level.
The update cost may vary depending on the states of the walk we want to update. 
Assume without loss of generality
that the update operation is of the form
$U=\sum_{i}\ket{i}\bra{i}\otimes U_{i}$, where each $U_i$ can be implemented using 
$q_i$ queries,
and the quantum state to be updated is of the
form $\ket{s}=\sum_{i}\alpha_{i}\ket{i}\ket{s_{i}}$.
Then the following lemma, used in \cite{Jeffery+SODA13}, shows that 
if the magnitude of the states $\ket{i}\ket{s_{i}}$ 
that cost much to update (i.e., such that $q_i$ is large)
is small enough, we can approximate the update operator $U$
with good precision by replacing all $U_{i}$ acting on such costly states with
the identity operator.
\begin{lemma}[\cite{Jeffery+SODA13}]
\label{lm:ApproxUnitary}
Let $U=\sum_{i}\ket{i}\bra{i}\otimes U_{i}$ be a controlled unitary operator
and let $q_{i}$ be the query complexity of exactly implementing $U_{i}$.
For any fixed integer $T$, define $\tilde{U}$ 
as
$\sum_{i:q_{i}\le T}\ket{i}\bra{i}\otimes U_{i}+
\sum_{i:q_{i}>T}\ket{i}\bra{i}\otimes\Bbb{I}$,
where $\Bbb{I}$ is the identity operator on the space
on which $U_{i}$ acts.
Then,
 for any quantum state 
 $\ket{s}=\sum_{i}\alpha_{i}\ket{i}\ket{s_{i}}$,
the inequality
 $\left|\bra{s}\tilde{U}U\ket{s}\right|\ge 1-\epsilon_{T}$
holds
whenever  $ \epsilon_{T}\ge \sum_{i:q_{i}>T}\left|\alpha_{i}\right|^{2}$.
 \end{lemma}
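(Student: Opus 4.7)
\vspace{2ex}\noindent\textit{Proof plan.} The plan is purely computational: exploit the block-diagonal structure in the control register $\ket{i}$ to split the expression into a ``cheap'' block $\{i : q_i \le T\}$ that cancels exactly by unitarity, and an ``expensive'' block $\{i : q_i > T\}$ whose contribution is controlled directly by the hypothesis $\sum_{i: q_i > T}|\alpha_i|^2 \le \epsilon_T$.

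Concretely, I would first apply $U$ term-by-term, obtaining $U\ket{s} = \sum_i \alpha_i \ket{i} U_i\ket{s_i}$. I would then apply the approximant (read as $\tilde U^\dagger$, which is the sense in which fidelity to $U$ is measured here, noting $\tilde U$ is unitary): on indices with $q_i \le T$ this contributes $U_i^\dagger$, and $U_i^\dagger U_i = \mathbb{I}$ collapses the cheap block to $\ket{s_i}$; on indices with $q_i > T$ it contributes $\mathbb{I}$ and leaves $U_i\ket{s_i}$ unchanged. Thus
\[
\tilde U^\dagger U \ket{s} \;=\; \ket{s} \;+\; \sum_{i:q_i > T} \alpha_i \ket{i}\bigl(U_i - \mathbb{I}\bigr)\ket{s_i}.
\]

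Taking the inner product with $\bra{s}$ and using orthonormality of the $\ket{i}$ register, one gets $\bra{s}\tilde U^\dagger U \ket{s} = \sum_{i:q_i\le T}|\alpha_i|^2 + \sum_{i:q_i > T}|\alpha_i|^2 \bra{s_i}U_i\ket{s_i}$. The first sum is at least $1 - \epsilon_T$ by hypothesis, while the second, in absolute value, is at most $\sum_{i:q_i > T}|\alpha_i|^2 \le \epsilon_T$ because each $|\bra{s_i}U_i\ket{s_i}| \le 1$ by unitarity of $U_i$. The triangle inequality then yields a lower bound of the stated form $1 - O(\epsilon_T)$; rescaling $T$ (or using the real-part bound rather than the magnitude bound on $\bra{s_i}U_i\ket{s_i}$) fixes the constant in front of $\epsilon_T$ to exactly~$1$.

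The main ``obstacle'' is merely bookkeeping: one must keep track of the direction of the approximation so that the cheap block cancels cleanly and the deviation from $1$ is supported entirely on the indices with $q_i > T$. There is no substantive analytic difficulty, and no property of $U_i$ other than unitarity is used; the argument is a few lines of linear algebra once the block decomposition is written out.
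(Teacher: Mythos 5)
The paper never proves this lemma --- it is imported by citation from Jeffery--Kothari--Magniez --- so there is no in-paper proof to compare against; I am assessing your argument on its own terms. Your block-diagonal computation is the right (and essentially the only) approach, and you carry it out correctly, including the necessary reading of $\bra{s}\tilde{U}U\ket{s}$ as $\bra{s}\tilde{U}^{\dagger}U\ket{s}$, i.e.\ the overlap of $\tilde{U}\ket{s}$ with $U\ket{s}$; taken literally, $\tilde{U}U$ would produce $\bra{s_i}U_i^2\ket{s_i}$ on the cheap block and the claim would be false. The one place you wave your hands is the final constant, and neither of your proposed fixes works: ``rescaling $T$'' changes nothing, since $\epsilon_T$ is defined relative to the same $T$, and the real-part bound still only gives $\mathrm{Re}\,\bra{s_i}U_i\ket{s_i}\ge -1$, hence $\bigl|\bra{s}\tilde{U}^{\dagger}U\ket{s}\bigr|\ge (1-\epsilon_T)-\epsilon_T=1-2\epsilon_T$. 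Indeed $1-2\epsilon_T$ is tight for your decomposition: take two indices with $|\alpha_2|^2=\epsilon_T$, $q_2>T$ and $U_2\ket{s_2}=-\ket{s_2}$, which gives $\bra{s}\tilde{U}^{\dagger}U\ket{s}=1-2\epsilon_T$. So your argument honestly proves the lemma with $1-2\epsilon_T$ in place of $1-\epsilon_T$; the statement as transcribed here appears to be off by this factor of $2$, which is immaterial to every use made of it in the paper (only inverse-polynomial precision is ever needed). The clean way to finish is to state the conclusion as $1-2\epsilon_T$, or equivalently to bound $\|\tilde{U}\ket{s}-U\ket{s}\|\le 2\sqrt{\epsilon_T}$, rather than to suggest the constant can be pushed to exactly $1$.
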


In the analysis of this paper, 
hypergeometric distributions will appear many times.
Let $HG(n,m,r)$ denote the hypergeometric distribution whose random variable $X$ 
is defined by 
\[
\Pr[X=j]=\frac{\binom{m}{j}\binom{n-m}{r-j}}{\binom{n}{r}}.
\] 
We first state below several tail bounds of hypergeometric distributions 
(the proof can be easily obtained from Theorem 2.10 in \cite{JLR11}).

\begin{lemma}\label{lemma:HGtail}
When $X$ has a hypergeometric distribution with expectation $\mu$, 
the following hold (where $\exp(x)$ denotes $e^x$): 
\begin{itemize}
\item[(1)] 
For any $0<\delta\leq 1$, $\Pr[X\geq (1+\delta)\mu]\leq \exp(-\frac{\mu\delta^2}{3})$. 
\item[(2)]
For any $0<\delta<1$, $\Pr[X\leq (1-\delta)\mu]\leq \exp(-\frac{\mu\delta^2}{2})$. 
\item[(3)]
For any $\delta>2e-1$, 
$
\Pr[X>(1+\delta)\mu]<\left(\frac{1}{2}\right)^{(1+\delta)\mu}.
$
\end{itemize}
\end{lemma}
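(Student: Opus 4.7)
The plan is to derive all three bounds directly from the Chernoff-type inequalities stated as Theorem 2.10 of [JLR11]. That theorem gives, for a hypergeometric random variable $X$ with mean $\mu$, the standard moment-generating-function bound
\[
\Pr[X \ge (1+\delta)\mu] \;\le\; \left(\frac{e^{\delta}}{(1+\delta)^{1+\delta}}\right)^{\mu}
\]
for every $\delta>0$, and the symmetric lower-tail bound for $0<\delta<1$. These are formally identical to the Chernoff bounds for the binomial distribution (the coupling argument of [JLR11] shows that a hypergeometric variable is stochastically dominated, in both tails, by a binomial variable with the same mean). Once this inequality is in hand, the three items of the lemma are obtained by purely analytic manipulation of the function $e^{\delta}/(1+\delta)^{1+\delta}$.

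For parts (1) and (2), I would simply invoke the well-known consequences of the Chernoff bound: using the inequality $(1+\delta)\ln(1+\delta) - \delta \ge \delta^2/3$ valid for $0<\delta\le 1$ one gets $(e^{\delta}/(1+\delta)^{1+\delta})^{\mu} \le \exp(-\mu\delta^{2}/3)$, which is exactly (1); and the analogous estimate on the lower tail gives $\exp(-\mu\delta^{2}/2)$, which is (2). Both of these are just the standard Chernoff corollaries and require no more than citing the elementary numerical inequalities just mentioned.

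The one step that needs a short calculation is (3). I would rewrite
\[
\frac{e^{\delta}}{(1+\delta)^{1+\delta}} \;=\; e^{-1}\left(\frac{e}{1+\delta}\right)^{1+\delta},
\]
so that
\[
\Pr[X>(1+\delta)\mu] \;\le\; e^{-\mu}\left(\frac{e}{1+\delta}\right)^{(1+\delta)\mu}.
\]
The hypothesis $\delta>2e-1$ is exactly what makes $e/(1+\delta)<1/2$, hence $(e/(1+\delta))^{(1+\delta)\mu} < (1/2)^{(1+\delta)\mu}$, and since $e^{-\mu}\le 1$ this yields the claimed bound $(1/2)^{(1+\delta)\mu}$.

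Overall, the only mild obstacle is pinning down the exact threshold $2e-1$ in part (3); the rest is a direct quotation of [JLR11] together with the standard elementary estimates used to pass from the moment-generating-function form of the Chernoff bound to the more convenient exponential form.
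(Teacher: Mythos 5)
Your derivation is correct and matches the paper's intent exactly: the paper gives no proof beyond citing Theorem 2.10 of [JLR11], which transfers the binomial Chernoff bounds (in moment-generating-function form) to the hypergeometric case, and your analytic manipulations — the inequality $(1+\delta)\ln(1+\delta)-\delta\ge\delta^2/3$ for (1), its lower-tail analogue for (2), and the rewriting $e^{\delta}/(1+\delta)^{1+\delta}=e^{-1}(e/(1+\delta))^{1+\delta}$ together with $1+\delta>2e$ for (3) — are all valid. The only nitpick is your parenthetical: the transfer in [JLR11] rests on Hoeffding's observation that a hypergeometric variable is a conditional expectation of a binomial one (so MGF bounds carry over by Jensen), not on stochastic domination, which does not actually hold in both tails; this does not affect the argument since you only use the cited theorem's conclusion.
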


\section{Statement of our main result}

In this section, we state our main result (an upper bound on the query complexity of finding 
a constant-sized sub-hypergraph in a 3-uniform hypergraph) 
in terms of loading schedules, which generalizes the concept of 
loading schedules for graphs introduced, in the
learning graph framework, by Lee, Magniez and Santha~\cite{Lee+SODA13},
and used in the framework of nested quantum walks by Jeffery, Kothari and Magniez~\cite{Jeffery+SODA13}.

Let $H$ be a 3-uniform hypergraph with $\kappa$ vertices.
We identify the set of vertices of $H$ with the set $\Sigma_1=\{1,\ldots,\kappa\}$.
We identify the set of hyperedges 
of $H$ with the set $\Sigma_3\subseteq\{\{1,2,3\},\{1,2,4\},\ldots,\{\kappa-2,\kappa-1,\kappa\}\}$.
We identify the set of (unordered) pairs of vertices 
included in at least one hyperedge
of $H$ with the set $\Sigma_2=\{\{i,j\} \:|\: \{i,j,k\}\in \Sigma_3 \text{ for some $k$}\}$.
A loading schedule for $H$ is defined as follows.

\begin{definition}
A {\em loading schedule for $H$} of length $m$ is a list $S=(s_1,\ldots,s_m)$ of $m$ elements such that the following three properties hold for all $t\in\{1,\ldots,m\}$:
(i) $s_t\in \Sigma_1\cup \Sigma_2\cup \Sigma_3$;
(ii) if $s_t=\{i,j\}$, then there exist $t_1,t_2\in\{1,\ldots,t-1\}$ such that $s_{t_1}=i$ and $s_{t_2}=j$;
(iii) if $s_t=\{i,j,k\}$, then there exist $t_1,t_2,t_3\in\{1,\ldots,t-1\}$ such that $s_{t_1}=\{i,j\}$, $s_{t_2}=\{i,k\}$ and $s_{t_3}=\{j,k\}$. 
A loading schedule $S$ is {\em valid} if no element of $\Sigma_1\cup \Sigma_2\cup \Sigma_3$ appears more than once and, 
for any $\{i,j,k\}\in \Sigma_3$, there exists an index $t\in\{1,\ldots,m\}$ 
such that $s_t=\{i,j,k\}$.
\end{definition}

We now introduce the concept of parameters associated to a loading schedule.
Formally, these parameters are functions of the 
variable $n$ representing the number of vertices of the input 3-uniform hypergraphs $G=(V,E)$. 
We will nevertheless, in a slight abuse of notation, consider that $n$ is fixed, and define them as integers (implicitly depending on $n$). 
\begin{definition}
Let $S=(s_1,\ldots,s_m)$ be a loading schedule for $H$ of length $m$. 
A {\em set of parameters for $S$} is a set of $m$ integers defined as follows: 
for each $t\in\{1,\ldots,m\}$,
\begin{itemize}
\item
if $s_t=i$, then the associated parameter is denoted by $r_i$ and satisfies $r_i\in\{1,\ldots,n\}$;
\item
if $s_t=\{i,j\}$, then the associated parameter is denoted by $f_{ij}$ and satisfies $f_{ij}\in\{1,\ldots,r_ir_j\}$;
\item
if $s_t=\{i,j,k\}$, then the associated parameter is denoted by $e_{ijk}$ and satisfies $e_{ijk}\in\{1,\ldots,r_ir_jr_k\}$.
\end{itemize}
The set of parameters is {\em admissible} if $r_i\ge 1$, $e_{ijk}\ge 1$,
$\frac{r_ir_j}{f_{ij}}\geq 1$, $\frac{f_{ij}f_{ik}f_{jk}/(r_ir_jr_k)}{e_{ijk}}\geq 1$, and  
the terms $\frac{n}{r_i}$, 
$\frac{f_{ij}}{r_i}$, $\frac{f_{ij}}{r_j}$, $\frac{f_{ij}f_{ik}}{r_ir_jr_k}$ 
are larger than $n^{\gamma}$ for some constant $\gamma>0$.
\end{definition}

Now we state the main result in terms of loading schedules. 

\begin{theorem}\label{th:main}
Let $H$ be any constant-sized 3-uniform hypergraph.
Let $S=(s_1,\ldots,s_m)$ be a valid loading schedule for $H$ with an admissible set of parameters. 
There exists a quantum algorithm that, given as input a 3-uniform hypergraph $G$ with $n$ vertices, 
finds a sub-hypergraph of $G$ isomorphic to $H$ (and returns ``no'' if there are no such sub-hypergraphs) 
with probability at least some constant,
and has query complexity
\[
\tilde O\left(
\mathsf{S}+\sum_{t=1}^m \left(\prod_{r=1}^t\frac{1}{\sqrt{\varepsilon_r}}\right)\frac{1}{\sqrt{\delta_t}}\mathsf{U_t}
\right),
\]
where $\mathsf{S}$, $\mathsf{U}_t$, $\delta_t$ and $\varepsilon_r$ are evaluated as follows:
\begin{itemize}
\item $\mathsf{S}=\sum_{\{i,j,k\}\in \Sigma_3} e_{ijk}$;
\item for $t\in\{1,\ldots, m\}$,
(i) if $s_t=\{i\}$, then $\delta_t=\Omega(\frac{1}{r_i})$, 
$\varepsilon_t=\Omega(\frac{r_i}{n})$ and 
$
\mathsf{U}_t=
\tilde O\left(1+\sum_{\{j,k\}:
\{i,j,k\}\in \Sigma_3}\frac{e_{ijk}}{r_i}\right)
$; 
(ii) if  $s_t=\{i,j\}$, then $\delta_t=\Omega(\frac{1}{f_{ij}})$, $\varepsilon_t=\Omega(\frac{f_{ij}}{r_ir_j})$ and
$
\mathsf{U}_t=\tilde O\biggl( 1+\sum_{k:
\{i,j,k\}\in \Sigma_3}\frac{e_{ijk}}{f_{ij}}\biggr)
$; 
(iii) if $s_t=\{i,j,k\}$, then $\delta_t=\Omega(\frac{1}{e_{ijk}})$, $\varepsilon_t=\Omega(\frac{e_{ijk}r_ir_jr_k}{f_{ij}f_{ik}f_{jk}})$ and 
$
\mathsf{U}_t=O(1).
$
\end{itemize}
\end{theorem}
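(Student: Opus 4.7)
My plan is to instantiate the nested quantum walk from Section~\ref{section_prelim} with $m$ levels mirroring the loading schedule $S=(s_1,\ldots,s_m)$, and to verify each quantity in the JKM complexity formula separately. At level $t$ I would run a quantum walk on a Johnson graph sitting on top of the data maintained by levels $1,\ldots,t-1$: if $s_t=i$ the walk is on $J(n,r_i)$ with data $V_i\subseteq V$ of size $r_i$; if $s_t=\{i,j\}$ it is on $J(r_ir_j,f_{ij})$ with data $F_{ij}\subseteq V_i\times V_j$ of size $f_{ij}$; and if $s_t=\{i,j,k\}$ it is on $J(|\Gamma_{ijk}|,e_{ijk})$, where $\Gamma_{ijk}=\{(u,v,w):(u,v)\in F_{ij},(u,w)\in F_{ik},(v,w)\in F_{jk}\}$, with data consisting of $e_{ijk}$ triples sampled from $\Gamma_{ijk}$ together with their oracle values $\chi(\{u,v,w\})$. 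Only the triple-level data stores oracle answers, giving $\mathsf{S}=\sum_{\{i,j,k\}\in\Sigma_3}e_{ijk}$, and the standard Johnson-graph spectral gap yields $\delta_t=\Omega(1/K_t)$ for the appropriate $K_t\in\{r_i,f_{ij},e_{ijk}\}$, matching the statement.

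Next, I would fix a copy of $H$ in $G$, say an embedding $\phi$ of $V(H)$ into $V$, and call a level-$t$ state \emph{marked} if (a) the object indicated by $s_t$ is loaded---meaning $\phi(i)\in V_i$, or $(\phi(i),\phi(j))\in F_{ij}$, or $(\phi(i),\phi(j),\phi(k))$ is among the chosen triples---\emph{and} (b) the joint data satisfies regularity: every natural degree quantity associated with the random subsets $V_i$, $F_{ij}$ and $\Gamma_{ijk}$ lies within a constant factor of its expectation (so, for instance, each vertex of $V_i$ appears in $\Theta(f_{ij}/r_i)$ pairs of $F_{ij}$, and each pair in $F_{jk}$ extends into $\Theta(f_{ij}f_{ik}/(r_ir_jr_k))$ triples of $\Gamma_{ijk}$). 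Each such degree has a hypergeometric distribution whose mean is at least $n^{\gamma}$ by admissibility, so Lemma~\ref{lemma:HGtail} forces the non-regular fraction to be exponentially small in $n^{\gamma}$. Consequently $\varepsilon_t$ differs from the unconditional marked fraction by only a $1\pm o(1)$ factor, yielding $r_i/n$, $f_{ij}/(r_ir_j)$ and $e_{ijk}/|\Gamma_{ijk}|=\Theta(e_{ijk}r_ir_jr_k/(f_{ij}f_{ik}f_{jk}))$ at the three types of level.

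For the update costs, a vertex swap at $s_t=i$ forces, for every $\{i,j,k\}\in\Sigma_3$, the refresh of all sampled triples of $\Gamma_{ijk}$ incident to the swapped vertex, plus the refresh of pairs of $F_{ij}$ and $F_{ik}$ touching it; on a regular state the former count is $\tilde O(e_{ijk}/r_i)$ and dominates, since any refreshed pair must then be propagated through the inner triple walk anyway. An analogous accounting gives $\tilde O\bigl(1+\sum_k e_{ijk}/f_{ij}\bigr)$ for a pair swap and $O(1)$ (one oracle query) for a triple swap. For the exponentially-small amplitude on non-regular states, where the update unitary may be arbitrarily costly, I would invoke Lemma~\ref{lm:ApproxUnitary} to substitute the identity on the expensive control branches, incurring only $o(1)$ error per walk step.

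The delicate point is a simultaneous calibration of the regularity conditions so that (i) the non-regular amplitude stays exponentially small throughout the evolution---despite the fact that the marginal distribution of the random subsets may drift across many walk steps---(ii) the regular marked fraction remains $\Omega$ of the unrestricted one, and (iii) the resulting update-cost bound on regular states exactly matches the announced $\mathsf{U}_t$. Controlling how the Lemma~\ref{lm:ApproxUnitary} approximation errors propagate through the $m$-fold nesting, and hence arguing that the final algorithm still succeeds with constant probability, is the main technical hurdle; once it is cleared, substitution of the quantities above into the JKM complexity formula recalled in Section~\ref{section_prelim} immediately yields the stated bound.
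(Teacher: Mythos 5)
Your high-level architecture matches the paper's: one Johnson-graph walk per schedule entry, oracle answers stored only at the triple level (so $\mathsf{S}=\sum e_{ijk}$ and all inner setup costs vanish), marked states defined as ``loaded \emph{and} regular,'' regularity enforced via hypergeometric tail bounds made exponentially strong by admissibility, and Lemma~\ref{lm:ApproxUnitary} used to truncate the costly branches. But there is a concrete gap at the heart of the construction: you put the triple-level walk on $J(|\Gamma_{ijk}|,e_{ijk})$, where $\Gamma_{ijk}$ is determined by the outer registers. Since $\Gamma_{ijk}$ (both the set and its cardinality) varies across the branches of the superposition over outer states, this does not define a single quantum walk, and---more damagingly---it leaves the outer update unitaries undefined: when a vertex swap changes $V_i$ to $V_i'$, the set $\Gamma_{ijk}$ changes to a different set $\Gamma'_{ijk}$, and one must coherently map an $e_{ijk}$-subset of the former to an $e_{ijk}$-subset of the latter while re-querying only the few triples that actually change. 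Your phrase ``refresh of all sampled triples incident to the swapped vertex'' presupposes a canonical correspondence between the old and new samples that you never construct. The paper's resolution is to run the inner walk on a \emph{fixed} index set $\{1,\ldots,M_{ijk}\}$ with $M_{ijk}=11f_{ij}f_{ik}f_{jk}/(r_ir_jr_k)$, interpret an index subset $R$ via the canonical map $\mathsf{Y}(R,\Gamma_{ijk})$, and prove a coupling lemma (Lemma~\ref{lemma:construction}): there is a permutation $\pi$ of the index set such that $|\mathsf{Y}(R,\Gamma)\Delta\mathsf{Y}(\pi(R),\Gamma')|$ is, with overwhelming probability over $R$, at most $\tilde O(e_{ijk}|\Gamma\Delta\Gamma'|/M_{ijk})$. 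The one-sided codegree bound (condition (d) of the marked states) is exactly what guarantees $|\Gamma_{ijk}|\le M_{ijk}$, which is needed both for $\varepsilon_t$ at the triple level and for the coupling to make sense. Your proposal has no counterpart to any of this.

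A second, related omission: even granting a correspondence between old and new samples, the bound $\tilde O(e_{ijk}/r_i)$ on the number of re-queried triples requires showing that $|\Gamma_{ijk}\Delta\Gamma'_{ijk}|=O(f_{ij}f_{ik}f_{jk}/(r_i^2r_jr_k))$ with overwhelming probability (the paper's Lemmas~\ref{claim1} and \ref{claim2}), which is a layered hypergeometric argument over the randomness of the \emph{inner} registers $R_{t_1},R_{t_2},R_{t_6}$---and, when some of those registers belong to \emph{outer} walks, one must instead fall back on their markedness (regularity) rather than on fresh randomness. You correctly identify ``drift'' and error propagation as the delicate points but offer no mechanism to resolve them; the paper's mechanism is precisely the combination of the fixed index sets, the $\mathsf{Y}$/$\pi$ coupling, and the regularity clauses built into the marked-state definitions. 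As written, your proposal is a plausible plan rather than a proof.
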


\section{Proof of Theorem \ref{th:main}}

In this section, we prove Theorem~\ref{th:main} 
by constructing an algorithm based on the concept of $m$-level nested quantum walks,
in which the  walk at level $t$ will correspond to the element $s_t$ of the loading schedule
for each 
$t\in\{1,\ldots,m\}$.
For convenience, we will write $M_{ijk}=11\frac{f_{ij}f_{ik}f_{jk}}{r_ir_jr_k}$ for each 
$\{i,j,k\}\in\Sigma_3$.

\subsection{Definition of the walks}\label{definition_of_the_walks}
At level $t\in\{1,\ldots,m\}$, the quantum walk will differ according to the nature of $s_t$, so there are three cases to consider.
\vspace{2mm}

\noindent{\bf Case 1 \boldmath{[$s_t=i$]}:}
The quantum walk will be over the Johnson graph $J(n,r_i)$.
The space of the quantum walk will then be 
$
\Omega_t=\left\{T\subseteq \{1,\ldots,n\} \mid |T|=r_i\right\}.
$
A state of this walk is an element $R_t\in \Omega_t$.
\vspace{2mm}

\noindent{\bf Case 2 \boldmath{[$s_t=\{i,j\}$]}:}
The quantum walk will be over $J(r_i r_j,f_{ij})$.
The space of the quantum walk will then be 
$
\Omega_t=\left\{T\subseteq \{1,\ldots,r_i r_j\} \mid |T|=f_{ij}\right\}.
$
A state of this walk is an element $R_t\in \Omega_t$.
\vspace{2mm}

\noindent{\bf Case 3 \boldmath{[$s_t=\{i,j,k\}$]}:}
The quantum walk will be over $J(M_{ijk},e_{ijk})$.
The space of the quantum walk will then be 
$
\Omega_t=\left\{T\subseteq \{1,\ldots,M_{ijk}\} \mid |T|=e_{ijk}\right\}.
$
A state of this walk is an element $R_t\in \Omega_t$.

\subsection{Definition of the data structures of the walks}\label{sub:datastructure}
Let us fix an arbitrary ordering on the set $V\times V\times V$
of triples of vertices.
For any set $\Gamma\subseteq V\times V\times V$
and any $R\subseteq\{1,\ldots,|V|^3\}$, 
define the set $\mathsf{Y}(R,\Gamma)$ 
consisting of at most $|R|$ triples of vertices which are taken from $\Gamma$ 
by the process below.
\begin{itemize}
\item
Construct a list $\Lambda$ of all the triples in $V\times V\times V$ as follows:
first, all the triples in $\Gamma$ are listed in increasing order and, then, all the triples in $(V\times V\times V)\backslash \Gamma$
are listed in  increasing order. 
\item
For any $a\in\{1,\ldots,|V|^3\}$, 
let $\Lambda[a]$ denote the $a$-th triple of the list.
\item
Define $\mathsf{Y}(R,\Gamma)=\{\Lambda[a]\:|\:a\in R\}\cap \Gamma$.
\end{itemize}
The following lemma 
will be useful later in this section.
\begin{lemma}\label{lemma:construction}
Let $\Gamma$ and $\Gamma'$ be two subsets of $V\times V\times V$. 
Let $p$ and $r$ be any parameters such that $1\le r\le p\le |V|^3$.
There exists a permutation $\pi$ of $\{1,\ldots,p\}$ such that, if $R$ is 
a subset of $\{1,\ldots,p\}$ of size $r$
taken uniformly at random, then
\[
\Pr_{R}\left[
|\mathsf{Y}
(R,\Gamma)\Delta\mathsf{Y}
(\pi(R),\Gamma')|\le\! 
\frac{22r|\Gamma\Delta \Gamma'|}{p}\!+\!100\log n
\right]\!\ge\! 
1-2\left(\frac{1}{2}\right)^{\!\!\!\frac{11r|\Gamma\Delta\Gamma'|}{p}+50\log n}\!.
\]
\end{lemma}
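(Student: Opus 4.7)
The plan is to define $\pi$ through the natural bijection identifying positions of common triples in the two orderings $\Lambda$ and $\Lambda'$, reduce the symmetric difference to two hypergeometric random variables controlled by $\Gamma \setminus \Gamma'$ and $\Gamma' \setminus \Gamma$, and then conclude via Lemma~\ref{lemma:HGtail}(3) together with a union bound.

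First I would build $\Lambda'$ by the same recipe as $\Lambda$ but with $\Gamma'$ in place of $\Gamma$. Writing $\mathrm{pos}(x)$ and $\mathrm{pos}'(x)$ for the position of a triple $x$ in $\Lambda$ and $\Lambda'$ respectively, define a global bijection $\sigma$ of $\{1,\ldots,|V|^3\}$ by $\sigma(\mathrm{pos}(x)) = \mathrm{pos}'(x)$. On the subset $U := \{a \in \{1,\ldots,p\} : \sigma(a) \in \{1,\ldots,p\}\}$, I would set $\pi = \sigma$; since $|U|$ equals the size of its image $\sigma(U) \subseteq \{1,\ldots,p\}$, $\pi$ extends to a permutation of $\{1,\ldots,p\}$ by choosing any bijection between the complements $\{1,\ldots,p\} \setminus U$ and $\{1,\ldots,p\} \setminus \sigma(U)$.

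Next I would use the equivalent descriptions
\[
\mathsf{Y}(R,\Gamma) = \{x \in \Gamma : \mathrm{pos}(x) \in R\}, \qquad \mathsf{Y}(\pi(R),\Gamma') = \{x \in \Gamma' : \mathrm{pos}'(x) \in \pi(R)\},
\]
which follow because $\Lambda$ lists $\Gamma$ in its first $|\Gamma|$ positions (and analogously for $\Lambda'$). For any $x \in \Gamma \cap \Gamma'$ with both positions in $\{1,\ldots,p\}$, the relation $\pi(\mathrm{pos}(x)) = \sigma(\mathrm{pos}(x)) = \mathrm{pos}'(x)$ makes membership in the two sets equivalent, so such $x$ do not contribute to the symmetric difference. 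Only $x \in \Gamma \setminus \Gamma'$ and $x \in \Gamma' \setminus \Gamma$ then contribute (plus, in the boundary regime where $|\Gamma|$ or $|\Gamma'|$ exceeds $p$, some $x \in \Gamma \cap \Gamma'$ with exactly one position in $\{1,\ldots,p\}$, discussed below), yielding
\[
|\mathsf{Y}(R,\Gamma)\,\Delta\,\mathsf{Y}(\pi(R),\Gamma')| \le |R \cap T_1| + |\pi(R) \cap T_2|,
\]
where $T_1 := \{\mathrm{pos}(x) : x \in \Gamma \setminus \Gamma'\}$ and $T_2 := \{\mathrm{pos}'(x) : x \in \Gamma' \setminus \Gamma\}$, with $|T_1| + |T_2| = |\Gamma \Delta \Gamma'|$.

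Finally, since $R$ and $\pi(R)$ are each uniformly distributed over size-$r$ subsets of $\{1,\ldots,p\}$, both $|R \cap T_1|$ and $|\pi(R) \cap T_2|$ follow hypergeometric distributions with means at most $\mu := r|\Gamma \Delta \Gamma'|/p$. Applying Lemma~\ref{lemma:HGtail}(3) with $(1+\delta)$-times the true mean set equal to $11\mu + 50\log n$ forces $1+\delta \ge 11$ (so $\delta > 2e-1$), bounding each tail by $(1/2)^{11\mu + 50\log n}$; the event that the sum exceeds $22\mu + 100\log n$ implies at least one summand exceeds $11\mu + 50\log n$, so a union bound delivers the claimed failure probability $2(1/2)^{11\mu + 50\log n}$. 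The main obstacle is the boundary contribution from $x \in \Gamma \cap \Gamma'$ when $|\Gamma|$ or $|\Gamma'|$ exceeds $p$: in the intended applications one typically has $|\Gamma|, |\Gamma'| \le p$ and these terms vanish, but in general one must verify that the freedom in extending $\pi$ over $\{1,\ldots,p\} \setminus U$ can be used to absorb the straddling contributions into the hypergeometric analysis, which requires extra bookkeeping but no essentially new idea.
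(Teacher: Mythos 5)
Your route is the same as the paper's: match positions of common triples to define $\pi$, reduce the symmetric difference to two hypergeometric random variables, and finish with Lemma~\ref{lemma:HGtail}(3) plus a union bound. In the regime $|\Gamma|,|\Gamma'|\le p$ your argument is complete and correct. The genuine gap is the general case, which you flag but defer, and the mechanism you propose for closing it does not work. If $x\in\Gamma\cap\Gamma'$ has $\mathrm{pos}(x)\le p<\mathrm{pos}'(x)$, then $x$ can lie in $\mathsf{Y}(R,\Gamma)$ but can \emph{never} lie in $\mathsf{Y}(\pi(R),\Gamma')$, no matter how $\pi$ is extended off $U$; the freedom in the extension cannot absorb these straddlers. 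What is actually needed is a counting argument: setting $\Lambda_1=\{\Lambda[a]:a\le p\}\cap\Gamma$ and $\Lambda'_1=\{\Lambda'[a]:a\le p\}\cap\Gamma'$, one must prove $|\Lambda_1\Delta\Lambda'_1|\le 2|\Gamma\Delta\Gamma'|$ by comparing, for common elements, their ranks in $\Gamma$ versus in $\Gamma'$ (this is the paper's Claim~\ref{claim0}, whose proof is a real two-case analysis, not mere notation). Your sets $T_1,T_2$ must then be enlarged to the position sets of $\Lambda_1\setminus\Lambda'_1$ and $\Lambda'_1\setminus\Lambda_1$.

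This enlargement is not cost-free. Your identity $|T_1|+|T_2|=|\Gamma\Delta\Gamma'|$ degrades to $|T_1|+|T_2|\le 2|\Gamma\Delta\Gamma'|$, so the guaranteed lower bound on $1+\delta$ in the application of Lemma~\ref{lemma:HGtail}(3) drops from $11$ to $11/2$. Since that lemma requires $\delta>2e-1$, i.e.\ $1+\delta>2e\approx 5.44$, the margin $11/2>2e$ is extremely tight --- this is exactly why the constants $11$, $22$ and $100$ appear in the statement --- and it has to be verified explicitly rather than dismissed as bookkeeping. With Claim~\ref{claim0} supplied and this constant check done, your argument coincides with the paper's proof.
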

\begin{proof}
Let $\Lambda$ and $\Lambda'$ be the lists obtained when using the construction for $\Gamma$ and $\Gamma'$, respectively.
Let us write 
\begin{align*}
&\Lambda_1=\{{\Lambda}[a]\:|\:1\le a\le p\}\cap \Gamma,\\
&\Lambda'_1=\{{\Lambda'}[a]\:|\:1\le a\le p\}\cap\Gamma'.
\end{align*}
We can show the following inequality.
\begin{claim}\label{claim0}
$
\left|\Lambda_1\Delta \Lambda'_1\right|\le2|\Gamma\Delta \Gamma'|.
$
\end{claim}
\begin{proof}
$\Lambda_1$ contains precisely the $|\Lambda_1\cap(\Gamma\cap\Gamma')|$ 
smallest (with respect to the increasing order) elements of $\Gamma\cap\Gamma'$,
while the other
$\left|\Lambda_1\cap (\Gamma\backslash \Gamma')\right|$ elements of $\Lambda_1$
are in $\Gamma\backslash\Gamma'$.
Similarly, $\Lambda'_1$ contains precisely the $|\Lambda'_1\cap(\Gamma\cap\Gamma')|$ 
smallest elements of $\Gamma\cap\Gamma'$, while the other
$\left|\Lambda'_1\cap (\Gamma'\backslash \Gamma)\right|$ elements of $\Lambda'_1$ are in $\Gamma'\backslash\Gamma$.
We can write
\begin{align*}
\left|\Lambda_1\Delta \Lambda'_1\right|&=\Big||\Lambda'_1\cap(\Gamma\cap\Gamma')|-|\Lambda_1\cap(\Gamma\cap\Gamma')|\Big| +\left|\Lambda_1\cap (\Gamma\backslash \Gamma')\right|+\left|\Lambda'_1\cap (\Gamma'\backslash \Gamma)\right|\\
&\le\Big||\Lambda'_1\cap(\Gamma\cap\Gamma')|-|\Lambda_1\cap(\Gamma\cap\Gamma')|\Big| +\left|\Gamma\backslash \Gamma'\right|+\left|\Gamma'\backslash \Gamma\right|.
\end{align*}
We have to consider two cases.

\noindent
{\bf Case 1:} $|\Lambda_1|=|\Lambda'_1|=p$

\noindent
Assume, without loss of generality, that $|\Lambda_1\cap(\Gamma\cap\Gamma')|\le |\Lambda'_1\cap(\Gamma\cap\Gamma')|$.
We have 
\[
|\Lambda_1\cap(\Gamma\cap\Gamma')|= p-|\Lambda_1\cap(\Gamma\backslash\Gamma')|\ge p- |\Gamma\backslash \Gamma'|
\]
and $|\Lambda'_1\cap(\Gamma\cap\Gamma')|\le p$.
Thus
\[
\Big||\Lambda'_1\cap(\Gamma\cap\Gamma')|-|\Lambda_1\cap(\Gamma\cap\Gamma')|\Big|\le
p-(p- |\Gamma\backslash \Gamma'|)= |\Gamma\backslash \Gamma'|,
\]
which gives $\left|\Lambda_1\Delta \Lambda'_1\right|\le2|\Gamma\Delta \Gamma'|$, as claimed.

\noindent
{\bf Case 2:} $\min(|\Lambda_1|,|\Lambda'_1|)<p$

\noindent
By symmetry, it suffices to show only the case where $|\Lambda'_1|\leq|\Lambda_1|$.
Since $|\Lambda'_1|<p$, we have $\Lambda'_1=\Gamma'$.
This implies that $|\Lambda_1\setminus\Lambda'_1|=|\Lambda_1\setminus\Gamma'|
\leq |\Gamma\setminus\Gamma'|$. Since $|\Lambda'_1|\leq |\Lambda_1|$, 
we have $|\Lambda'_1\setminus\Lambda_1|\leq |\Lambda_1\setminus\Lambda'_1|
\leq |\Gamma\setminus \Gamma'|$. 
Hence, $|\Lambda_1\Delta\Lambda'_1|\leq 2|\Gamma\setminus\Gamma'|
\leq 2|\Gamma\Delta\Gamma'|$ also holds
in this case.
\end{proof}

For any $a\in\{1,\ldots,\min(p,|\Gamma|)\}$ such that $\Lambda[a]$ is in $\Lambda'_1$, we set $\pi(a)=a'$, 
where $a'$ is the (unique) index in $\{1,\ldots,\min(p,|\Gamma'|)\}$ such that 
$\Lambda[a]=\Lambda'[a']$.
For all other $a\in\{1,\ldots, p\}$, we set $\pi(a)$ arbitrarily such that $\pi$ becomes a permutation of $\{1,\ldots,p\}$.

Let $R$ be any subset of $\{1,\ldots,p\}$. Define the subsets $S_R,S'_R\subseteq R$ as follows:
\begin{align*}
&S_R=\left\{a\in R\:|\:\Lambda[a]\in\Lambda_1\backslash\Lambda'_1\right\},\\
&S'_R=\left\{b\in \pi(R)\:|\:\Lambda'[b]\in\Lambda'_1\backslash\Lambda_1\right\}.
\end{align*}
From the definition of $\pi$, we know that for
any element $a\in R\cap\{1,\ldots,|\Gamma|\}$ such that $a\notin S_R$
we have $\Lambda[a]=\Lambda'[\pi(a)]$, 
which means that this element is not in 
$\mathsf{Y}(R,\Gamma)\backslash\mathsf{Y}(\pi(R),\Gamma')$.
This implies that
$
\mathsf{Y}(R,\Gamma)\backslash\mathsf{Y}(\pi(R),\Gamma')\subseteq\{\Lambda[a]\:|\:a\in S_R\}.
$
Similarly, we have \sloppy
$
\mathsf{Y}(\pi(R),\Gamma')\backslash\mathsf{Y}(R,\Gamma)\subseteq\{\Lambda'[a]\:|\:a\in S_R'\},
$ 
which gives the inequality
\begin{align*}
|\mathsf{Y}(R,\Gamma)\Delta\mathsf{Y}(\pi(R),\Gamma')|
&\le |S_R|+|S'_R|.
\end{align*}
Recall that $R$ is taken uniformly at random from $\{1,\ldots,p\}$ 
so that $|R|=r$.
Thus, $|S_R|$ has hypergeometric distribution $HG(p,|\Lambda_1\backslash\Lambda'_1|,r)$
and its expectation is  $\mu=\frac{r|\Lambda_1\backslash \Lambda'_1|}{p}$.
Taking 
$\delta=\frac{1}{\mu}(11\frac{r|\Gamma\Delta\Gamma'|}{p}+50\log n)-1$,
we have 
\[
\Pr\left[|S_R|\geq 11\frac{r|\Gamma\Delta\Gamma'|}{p}+50\log n\right]
=
\Pr\left[|S_R|\geq (1+\delta)\mu\right].
\]
Note that by Claim~\ref{claim0}, 
$
1+\delta=
\frac{1}{\mu}(11\frac{r|\Gamma\Delta\Gamma'|}{p}+50\log n)
>11\cdot \frac{|\Gamma\Delta\Gamma'|}{|\Lambda_1\Delta\Lambda'_1|}
\geq 11/2
$
and hence $\delta\geq 9/2>2e-1$. By Lemma~\ref{lemma:HGtail}(3), 
we have
\[
\Pr\left[|S_R|\geq 11\frac{r|\Gamma\Delta\Gamma'|}{p}+50\log n\right]
<
\bigl(1/2\bigr)^{(1+\delta)\mu}
=
\bigl(1/2\bigr)^{11\frac{r|\Gamma\Delta\Gamma'|}{p}+50\log n}.
\]
Similarly,
$
\Pr\left[|S'_R|\geq 11\frac{r|\Gamma\Delta\Gamma'|}{p}+50\log n\right]
<
\left(1/2\right)^{11\frac{r|\Gamma\Delta\Gamma'|}{p}+50\log n}.
$ 
Therefore,
\begin{align*}
&\!\!\!\!\!\!\Pr\left[|\mathsf{Y}(R,\Gamma)\Delta\mathsf{Y}(\pi(R),\Gamma')|
\geq 22\frac{r|\Gamma\Delta\Gamma'|}{p}+100\log n\right]\\
&\leq
\Pr\left[|S_R|+|S'_R|\geq 22\frac{r|\Gamma\Delta\Gamma'|}{p}+100\log n\right]\\
&<
2\bigl(1/2\bigr)^{11\frac{r|\Gamma\Delta\Gamma'|}{p}+50\log n}.
\end{align*}
This completes the proof of Lemma \ref{lemma:construction}.
\end{proof}

Suppose that the states of the walks at levels $1,\ldots,m$
are $R_{1},\ldots,R_{m}$, respectively. 
Assume that the set of vertices of $G$ is
$V=\{v_{1},\ldots,v_{n}\}$.
We first interpret the states  $R_{1},\ldots,R_{m}$ 
as sets of vertices, sets of pairs of vertices or sets of triples of vertices in~$V$,
as follows.
For each $t\in\{1,\ldots,m\}$, there are three cases to consider.
\vspace{2mm}

\noindent{\bf Case 1 \boldmath{[$s_t=i$]}:}
In this case, $R_t=\{a_1,\ldots,a_{r_i}\}\subseteq \{1,\ldots,n\}$.
We associate to $R_t$ the set $V_i=\{v_{a_1},\ldots,v_{a_{r_i}}\}$.
For further reference, we will rename the vertices in this set as 
$V_i=\{v^{i}_{1},\ldots,v^i_{r_i}\}$.
\vspace{2mm}

\noindent{\bf Case 2 \boldmath{[$s_t=\{i,j\}$ with $i<j$]}:}
We know that, in this case, there exist $t_1,t_2\in\{1,\ldots,t-1\}$ such that $s_{t_1}=i$ and $s_{t_2}=j$.
The state $R_t$ represents a set $\{(a_1,b_1),\ldots,(a_{f_{ij}},b_{f_{ij}})\}$ of $f_{ij}$ pairs in $R_{t_1}\times R_{t_2}$.
We associate to it the set $F_{ij}=\{(v^i_{a_1},v^j_{b_1}),\ldots,(v^i_{a_{f_{ij}}},v^j_{b_{f_{ij}}})\}$ of pairs
of vertices.
\vspace{2mm}

\noindent{\bf Case 3 \boldmath{[$s_t=\{i,j,k\}$ with $i<j<k$]:}}
We know that there exist $t_1,t_2,t_3\in\{1,\ldots,t-1\}$ such that $s_{t_1}=\{i,j\}$, $s_{t_2}=\{i,k\}$
and $s_{t_3}=\{j,k\}$, and $R_t$ is a subset of $\{1,\ldots,M_{ijk}\}$
with $|R_t|=e_{ijk}$. Let us define the set
$$
\Gamma_{ijk}=\left\{(u,v,w)\in V_i\times V_j\times V_k \:|\: (u,v)\in F_{ij}\textrm{, }(u,w)\in F_{ik}\textrm{ and }(v,w)\in F_{jk}\right\}.
$$
We associate to $R_t$ the set $E_{ijk}=\mathsf{Y}(R_t,\Gamma_{ijk})$.
\vspace{2mm}

We are now ready to define the data structures involved in the walks.
When the states of the walks at levels $1,\ldots,(m-1)$ are $R_{1},\ldots,R_{m-1}$, respectively, and the state of the most inner walk is $R_m$, the data structure associated with the most inner walk 
is denoted by $D(R_{1},\ldots,R_{m})$ and defined as:
\[
D(R_{1},\ldots,R_{m})=\biggl\{(\{u,v,w\},\chi(\{u,v,w\}))\:|\:
(u,v,w)
\in \bigcup_{\{i,j,k\}\in\Sigma_3\colon i<j<k}E_{ijk}\biggr\}.
\]
The data structure associated with the walk at level $t$, for each $t\in\{1,\ldots,m-1\}$, is defined as:
\[
\sum_{R_{t+1}\in \Omega_{t+1}}\cdots\sum_{R_m\in \Omega_{m}}
\ket{R_{t+1}}\cdots\ket{R_{m}}\ket{D(R_{1},\ldots,R_{m})}.
\]
Here and hereafter we omit normalization factors.
\subsection{Marked states of the walks}\label{subsection:marked}
For any $t\in\{1,\ldots,m-1\}$, the purpose of the walk at level $t+1$ is to check if the state of the walk $t$ is marked
(for the most inner walk, the state can be checked without running another walk, since all the information necessary is already in the database).
In this subsection we define the set of marked states for each walk.

Assume that the hypergraph $G$ contains a (without loss of generality, unique) 
sub-hypergraph isomorphic to $H$. 
Let $\{u_1,\ldots,u_{\kappa}\}$ denote the vertex set of this sub-hypergraph.  
For the most outer walk, $s_1=j$ for some $j\in\{1,\ldots,\kappa\}$ and we say that $R_1$ is marked if and only if $u_j\in V_j$.
Consider a state $R_t$ of the walk at level $t>1$, and suppose that the states $R_1,\ldots,R_{t-1}$ are all marked. 
We have again three cases to consider. 
\vspace{2mm}

\noindent{\bf Case 1 \boldmath{[$s_t=i$]}:}
$R_t$ corresponds to $V_i$. We say that $R_t$ is marked if and only if $u_i\in V_i$.
\vspace{2mm}

\noindent{\bf Case 2 \boldmath{[$s_t=\{i,j\}$ with $i<j$]}:}
$R_t$ corresponds to $F_{ij}$, and we say that $R_t$ is marked if and only if
the following four conditions hold:
\begin{itemize}
\item[(a)]
$(u_i,u_j)\in F_{ij}$;
\item[(b)]
for all $u\in V_i$,  $\frac{f_{ij}}{2r_i}\le|\{v\in V_j\:|\: (u,v)\in F_{ij}\}|\le 2\frac{f_{ij}}{r_i}$;
\item[(c)]
for all $v\in V_j$,  $\frac{f_{ij}}{2r_j}\le|\{u\in V_i\:|\: (u,v)\in F_{ij}\}|\le 2\frac{f_{ij}}{r_j}$;
\item[(d)]
for any $k$ such that there exists $t_1\in\{1,\ldots,t-1\}$ for which $s_{t_1}=\{i,k\}$,
and any $(v,w)\in V_j\times V_k$,
$|\{u\in V_i\:|\:(u,v)\in F_{ij} \textrm{ and }(u,w)\in F_{ik}\}|\le 11\frac{f_{ij}f_{ik}}{r_ir_jr_k}$.
\end{itemize}
\vspace{2mm}

\noindent{\bf Case 3 \boldmath{[$s_t=\{i,j,k\}$ with $i<j<k$]}:}
$R_t$ corresponds to $E_{ijk}$, and we say that $R_t$ is marked if and only if
$(u_i,u_j,u_k)\in E_{ijk}$.
\vspace{2mm}

The next subsection will use the following lemma. 
\begin{lemma}\label{lemma:balanced}
Assume that, for Case 2, $R_t$ is taken uniformly at random from $\Omega_t$ 
(i.e., $F_{ij}$ corresponds to a set of $f_{ij}$ pairs taken 
uniformly at random from $V_i\times V_j$). Then,
\[
\Pr[\textrm{Conditions (b),(c),(d) hold for $F_{ij}$} ]\ge
1-2r_i e^{- \frac{f_{ij}}{8r_i}} -2r_j e^{-\frac{f_{ij}}{8r_j}}
-r_{j}r_k\kappa 2^{-11\frac{f_{ij}f_{ik}}{r_ir_jr_k}}.
\]
\end{lemma}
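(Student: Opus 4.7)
The plan is to treat each of the three regularity conditions separately and combine them with a union bound, applying in each case the hypergeometric tail estimates of Lemma~\ref{lemma:HGtail} to the marginal distribution obtained when $F_{ij}$ is uniform.

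For condition~(b), I would first fix $u\in V_i$ and note that the random variable $X_u = |\{v\in V_j : (u,v)\in F_{ij}\}|$ follows the hypergeometric distribution $HG(r_ir_j,r_j,f_{ij})$ with mean $\mu_1=f_{ij}/r_i$, because $F_{ij}$ is a uniform $f_{ij}$-subset of $V_i\times V_j$. Applying Lemma~\ref{lemma:HGtail}(1) with $\delta=1$ bounds $\Pr[X_u\ge 2\mu_1]$ by $\exp(-\mu_1/3)$, and Lemma~\ref{lemma:HGtail}(2) with $\delta=1/2$ bounds $\Pr[X_u\le \mu_1/2]$ by $\exp(-\mu_1/8)$. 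Since $\exp(-\mu_1/3)\le\exp(-\mu_1/8)$, summing these probabilities and taking a union bound over the $r_i$ choices of $u$ yields the first error term $2r_i\exp(-f_{ij}/(8r_i))$. Condition~(c) is handled by an identical argument with the roles of $V_i$ and $V_j$ exchanged, producing the second error term $2r_j\exp(-f_{ij}/(8r_j))$.

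For condition~(d) I would fix $k$ with $s_{t_1}=\{i,k\}$ for some $t_1<t$ and a pair $(v,w)\in V_j\times V_k$. Setting $A_w=\{u\in V_i : (u,w)\in F_{ik}\}$ and $a=|A_w|$, the random variable $N_{vwk}=|\{u\in A_w : (u,v)\in F_{ij}\}|$ follows $HG(r_ir_j,a,f_{ij})$ with mean $\mu=af_{ij}/(r_ir_j)$. The key observation is that $F_{ik}$ was already produced at an earlier level of the nested walk and may be assumed to satisfy its own analogue of condition~(c), so that $a\le 2f_{ik}/r_k$ and therefore $\mu\le 2f_{ij}f_{ik}/(r_ir_jr_k)$. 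Setting the target $x=11f_{ij}f_{ik}/(r_ir_jr_k)$, the ratio $x/\mu$ is at least $11/2>2e$, so Lemma~\ref{lemma:HGtail}(3) with $\delta=x/\mu-1$ applies and gives $\Pr[N_{vwk}>x]<2^{-x}=2^{-11f_{ij}f_{ik}/(r_ir_jr_k)}$. A union bound over the at most $r_jr_k$ pairs $(v,w)$ and the at most $\kappa$ eligible values of $k$ then produces the third error term.

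The main obstacle is condition~(d): its failure probability genuinely depends on $F_{ik}$, so the analysis cannot be reduced solely to the distribution of $F_{ij}$. Recognising that the required bound on $a$ is exactly the statement of condition~(c) applied to $F_{ik}$ at its own level of the nested walk is what lets Lemma~\ref{lemma:HGtail}(3) be invoked, since it controls $\mu$ tightly enough for the ratio $x/\mu$ to exceed the threshold $2e-1$ that part~(3) of that lemma demands. The other slight subtlety is that the constant $11$ in the definition of $M_{ijk}$ and in the bound of condition~(d) is tuned precisely so that this ratio comfortably clears $2e$, which is why this specific numerical constant appears in the statement of the lemma.
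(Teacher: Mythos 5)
Your proof is correct and follows essentially the same route as the paper's: hypergeometric tail bounds from Lemma~\ref{lemma:HGtail}(1)--(2) with a union bound over vertices for conditions (b) and (c), and for condition (d) the same key step of using the markedness of the earlier-level state $R_{t_1}$ to bound $|\{u\in V_i : (u,w)\in F_{ik}\}|$ by $2f_{ik}/r_k$ so that Lemma~\ref{lemma:HGtail}(3) applies with $1+\delta \ge 11/2 > 2e$. No gaps.
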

\begin{proof}
For each $u\in V_i$, the quantity $|\{v\in V_j\:|\: (u,v)\in F_{ij}\}|$ is a random variable with hypergeometric distribution $HG(r_ir_j,r_j,f_{ij})$. 
Its expectation is $f_{ij}/r_i$, and by Lemma \ref{lemma:HGtail}(1-2) it holds that  
\begin{align*}
\Pr\left[|\{v\in V_j\:|\: (u,v)\in F_{ij}\}|> 2\frac{f_{ij}}{r_i}\right]&\le 
\exp(-\frac{1}{3}\times \frac{f_{ij}}{r_i})
\leq
\exp(-\frac{1}{8}\times \frac{f_{ij}}{r_i}),
\\
\Pr\left[|\{v\in V_j\:|\: (u,v)\in F_{ij}\}|< \frac{f_{ij}}{2r_i}\right]&\le \exp(-\frac{1}{8}\times \frac{f_{ij}}{r_i}).
\end{align*}
A similar statement holds for the degree of each $v\in V_j$,
and thus, from the union bound, we obtain
\[
\Pr[\textrm{Condition (b) or (c) does not hold for $F_{ij}$} ]\le 2r_i\exp(-\frac{f_{ij}}{8r_i})+2r_j\exp(-\frac{f_{ij}}{8r_j}).
\]  

For any $k$ such that there exists $t_1\in\{1,\ldots,t-1\}$ for which $s_{t_1}=\{i,k\}$,
consider any $(v,w)\in V_j\times V_k$. Let us write $S=\{u\in V_i\:|\:(u,w)\in F_{ik}\}$. 
Since $R_{t_1}$ is marked, we know that $|S|\le 2f_{ik}/r_k$.
The quantity $|\{u\in V_i\:|\:(u,v)\in F_{ij} \textrm{ and }(u,w)\in F_{ik}\}|$ has hypergeometric
distribution $HG(r_ir_j,|S|,f_{ij})$.  
Applying Lemma \ref{lemma:HGtail}(3)
with $\delta=\frac{11f_{ik}/r_k}{|S|}-1>2e-1$,
we obtain
\[
\Pr\left[
|\{u\in V_i\:|\:(u,v)\in F_{ij} \textrm{ and }(u,w)\in F_{ik}\}|\ge 11\frac{f_{ij}f_{ik}}{r_ir_jr_k}
\right]\le 2^{-11\frac{f_{ij}f_{ik}}{r_ir_jr_k}}.
\]
Using the union bound (note in particular that there are at most $\kappa$ possibilities for $k$), we conclude that
\[
\Pr\left[
\textrm{Condition (d) does not hold for $F_{ij}$}
\right]\le r_{j}r_k\kappa2^{-11\frac{f_{ij}f_{ik}}{r_ir_jr_k}}.
\]
The statement of the lemma then follows from the union bound.
\end{proof}

\subsection{Analysis of the algorithm}\label{sub:analysis}
Our nested quantum walk algorithm finds a marked state in the most inner walk and thus a sub-hypergraph isomorphic to $H$, with high probability, 
since, as will be shown below, the ideal nested quantum walks can 
be approximated with high accuracy. 
As explained in Section \ref{section_prelim},
the overall query complexity of the walk is 
\[
\tilde O\left(
\mathsf{S}+\sum_{t=1}^m \left(\prod_{r=1}^t\frac{1}{\sqrt{\varepsilon_r}}\right)\frac{1}{\sqrt{\delta_t}}\mathsf{U}_t
\right).
\]
We will show below that the values of the terms $\mathsf{S}$, $\mathsf{U_t}$, $\delta_t$ and $\varepsilon_t$ are 
as claimed in the statement of Theorem~\ref{th:main}.

We first make the following simple observation: when computing $\mathsf{U}_t$ and $\varepsilon_t$, we can assume that the state $R_{t-1}$ of the immediately outer walk is marked (and thus, by applying 
this argument recursively, that the states $R_1$,\ldots, $R_{t-1}$ of all the outer walks are marked). 
Indeed, remember that the purpose of the walk at level $t$ is to check if the state $R_{t-1}$ is marked.
We first evaluate its complexity under
the assumption that $R_{t-1}$ is marked, giving some upper bound $T$ on the complexity. 
Now, since the checking procedure in our framework has one-sided error, 
in the case where $R_{t-1}$ is not marked
the checking procedure may not terminate after $T$ queries, but we can stop it  after $T$ queries anyway and simply output that $R_{t-1}$ is not marked.

The setup cost $\mathsf{S}$ for the algorithm is the number of queries needed 
to construct the superposition
\[
\sum_{R_{1}\in \Omega_{1}}\cdots\sum_{R_m\in \Omega_{m}}
\ket{R_{1}}\cdots\ket{R_{m}}\ket{D(R_{1},\ldots,R_{m})}.
\]
This value is at most $\sum_{\{i,j,k\}\in \Sigma_3} e_{ijk}$.

We next evaluate 
$\delta_t$ and $\varepsilon_t$. The analysis is again divided into three cases.
\vspace{2mm}

\noindent{\bf Case 1 \boldmath{[$s_t=i$]}:}
Since the quantum walk is over $J(n,r_i)$ by the definition 
in Section~\ref{definition_of_the_walks}, we have $\delta_t=\Omega(\frac{1}{r_i})$ 
and $\varepsilon_t=\Omega(\frac{r_i}{n})$.
\vspace{2mm}

\noindent{\bf Case 2 \boldmath{[$s_t=\{i,j\}$ with $i<j$]}:}
Since the quantum walk is over $J(r_ir_j,f_{ij})$,
we have $\delta_t=\Omega(\frac{1}{f_{ij}})$. 
The fraction of states $F_{ij}$ such that $(u_i,u_j)\in F_{ij}$ is $\Omega(\frac{f_{ij}}{r_ir_j})$.
While all those states may not be marked, Lemma \ref{lemma:balanced} 
implies that the fraction of those states that are not marked is exponentially small when 
the set of parameters is admissible. 
Thus
$\varepsilon_t=\Omega(\frac{f_{ij}}{r_ir_j})$.
\vspace{2mm}

\noindent{\bf Case 3 [\boldmath{$s_t=\{i,j,k\}$ with $i<j<k$}]:}
In this case $\delta_t=\Omega(\frac{1}{e_{ijk}})$. 
\sloppy Since all the states $R_{1},\ldots,R_{t-1}$ of the outer walks are assumed 
to be marked, by item (d) of the definition of the marked states in Section \ref{subsection:marked}, we can upper-bound
$|\Gamma_{ijk}|
=
\sum_{(v,w)\in F_{jk}}|\{u\in V_i\mid (u,v)\in F_{ij}\mbox{ and }(u,w)\in F_{ik}\}|$
by $ |F_{jk}|
\frac{11f_{ij}f_{ik}}{r_ir_jr_k}=M_{ijk}.
$
Thus, we have $\varepsilon_t=\Omega(\frac{e_{ijk}}{M_{ijk}})$.
\vspace{2mm}

Finally, we evaluate the cost $\mathsf{U}_t$,
which is the cost of transforming the quantum state
\[\sum_{R_{t+1}\in \Omega_{t+1}}\cdots\sum_{R_m\in \Omega_{m}}
\ket{R_{t+1}}\cdots\ket{R_{m}}\ket{D(R_{1},\ldots,R_{t-1},R_{t},R_{t+1},\ldots,R_{m})},
\]
to the quantum state
\[\sum_{R_{t+1}\in \Omega_{t+1}}\cdots\sum_{R_m\in \Omega_{m}}
\ket{R_{t+1}}\cdots\ket{R_{m}}\ket{D(R_{1},\ldots,R_{t-1},R'_{t},R_{t+1},\ldots,R_{m})},\]
for any two states $R_t$ and $R'_t$ adjacent in the corresponding Johnson graph.
We again divide the analysis into three cases.
\vspace{2mm}

\noindent{\bf Case 1 \boldmath{[$s_t=i$]}:} 
In this case $R_t$ and $R'_t$ are two subsets of $\{1,\ldots,n\}$, both of size $r_i$,
differing by exactly one element. The corresponding subsets $V_{i}$ and $V'_{i}$
also differ by exactly one element: let us write $V'_{i}=(V_i\backslash \{u\})\cup\{u'\}$. 
For any $\{i,j,k\}\in\Sigma_3$, there exist some $t_1,t_2,t_3\in\{t+1,\ldots,m\}$
such that $s_{t_1}=\{i,j\}$, $s_{t_2}=\{i,k\}$ and $s_{t_3}=\{i,j,k\}$. There also exist some
$t_4,t_5,t_6\in\{1,\ldots,m\}$ such that $s_{t_4}=j$, $s_{t_5}=k$ and $s_{t_6}=\{j,k\}$.
Note that $t_4,t_5,t_6$ can be smaller than $t$, but we will first assume 
that they are all larger than $t$ (the other cases, which are actually easier to analyze, 
are discussed at the end of the analysis). 
A state $R_{t_4}$ defines a set $V_j$ of $r_j$ vertices and, for any $R_{t_1}\in \Omega_{t_1}$,
the state $(R_t,R_{t_1},R_{t_4})$
defines a set of $f_{ij}$ pairs in $V_i\times V_j$, as described in Section~\ref{subsection:marked}.
In the same way, for any $R'_{t_1}\in \Omega_{t_1}$, 
the state $(R'_t,R'_{t_1},R_{t_4})$
defines a set of $f_{ij}$ pairs in $V'_i\times V_j$.
There exists a permutation $\pi_1$ of the elements of $\Omega_{t_1}$ such that,
for any $R_{t_1}\in \Omega_{t_1}$,
the set $F_{ij}$ defined by $(R_t,R_{t_1},R_{t_4})$ 
and the set $F'_{ij}$ defined by $(R'_t,\pi_1(R_{t_1}),R_{t_4})$ 
are related in the following way:
\[
F'_{ij}=(F_{ij}\backslash\{(u,v)\in\{u\}\times V_j\:|\:(u,v)\in F_{ij}\})\cup\{(u',v)\in\{u'\}\times V_j\:|\:(u,v)\in F_{ij}\},
\]
which means that each pair of the form $(u,v)$ in $F_{ij}$ is replaced by the pair $(u',v)$ in $F'_{ij}$, while the other 
pairs are the same in $F_{ij}$ and in $F'_{ij}$. 

\sloppy Similarly, there exists a permutation $\pi_2$ of the elements of $\Omega_{t_2}$ such that,
for any $R_{t_2}\in \Omega_{t_2}$,
the set $F_{ik}$ defined by $(R_t,R_{t_2},R_{t_5})$ 
and the set $F'_{ik}$ defined by $(R'_t,\pi_2(R_{t_2}),R_{t_5})$ 
are related in the following way:
\[
F'_{ik}=(F_{ik}\backslash\{(u,w)\in\{u\}\times V_k\:|\:(u,w)\in F_{ik}\})\cup\{(u',w)\in\{u'\}\times V_k\:|\:(u,w)\in F_{ik}\}.
\]

The states $(R_t,R_{t_1},R_{t_2},R_{t_4},R_{t_5},R_{t_6})$ define sets 
$V_{i},F_{ij},F_{ik},V_j,V_k,F_{jk},\Gamma_{ijk}$, while the
states $(R'_t,\pi_1(R_{t_1}),\pi_2(R_{t_2}),R_{t_4},R_{t_5},R_{t_6})$ define 
sets $V'_{i},F'_{ij},F'_{ik},V_j,V_k$, $F_{jk},\Gamma'_{ijk}$.
Given any state $R_{t_3}$,
let $E_{ijk}(R_{t},R_{t_1},R_{t_2},R_{t_3},R_{t_4},R_{t_5},R_{t_6})$ denote the set of hyperedges to be queried associated with $\Gamma_{ijk}$ and $R_{t_3}$,
and $E_{ijk}(R'_{t},\pi_1(R_{t_1}),\pi_2(R_{t_2}),R_{t_3},R_{t_4},R_{t_5},R_{t_6})$ denote the set of hyperedges to be queried associated with $\Gamma'_{ijk}$ and $R_{t_3}$.
By Lemmas~\ref{lm:ApproxUnitary} and \ref{lemma:construction}, 
the mapping
\begin{align*}\hspace{-5mm}
\ket{R_{t_1}}&\ket{R_{t_2}}\ket{R_{t_4}}\ket{R_{t_5}}\ket{R_{t_6}}\sum_{R_{t_3}\in\Omega_{t_3}}\ket{R_{t_3}}\ket{E_{ijk}(R_{t},R_{t_1},R_{t_2},R_{t_3},R_{t_4},R_{t_5},R_{t_6})}
\mapsto\\
&\hspace{-6mm}\ket{\pi_1(R_{t_1})}\ket{\pi_2(R_{t_2})}\ket{R_{t_4}}\ket{R_{t_5}}\ket{R_{t_6}}\!\!\sum_{R_{t_3}\in\Omega_{t_3}}\ket{R_{t_3}}\ket{E_{ijk}(R'_{t},\pi_1(R_{t_1}),\pi_2(R_{t_2}),R_{t_3},R_{t_4},R_{t_5},R_{t_6})}
\end{align*}
can be approximated within inverse polynomial 
precision\footnote{Note that a better estimation of the accuracy of the approximation can be obtained, but in this proof  approximation within inverse polynomial will be enough for our purpose. In consequence, while stronger tail bounds can be proved, the statements of Lemmas 
\ref{claim1} and \ref{claim2} will be enough for our purpose.} 
using 
$
\tilde O\left(\frac{e_{ijk}|\Gamma_{ijk}\Delta\Gamma'_{ijk}|}{M_{ijk}}+\log n\right) = 
\tilde O\left(\frac{e_{ijk}|\Gamma_{ijk}\Delta\Gamma'_{ijk}|}{M_{ijk}}+1\right)
$
queries (here Lemma~\ref{lm:ApproxUnitary} is used  
with 
 $T=22\frac{e_{ijk}|\Gamma_{ijk}\Delta\Gamma'_{ijk}|}{M_{ijk}}+100\log n$,
and then $\epsilon_{T}$ can be set to
$2\left(\frac{1}{2}\right)^{11\frac{e_{ijk}|\Gamma_{ijk}\Delta\Gamma'_{ijk}|}{M_{ijk}}+50\log n}$
by Lemma~\ref{lemma:construction}
with $p=M_{ijk}$ and $r=e_{ijk}$). 

We will use the following lemma. 
\begin{lemma}\label{claim1}
When $R_{t_1}, R_{t_2}$ and $R_{t_6}$ are taken uniformly at random,
\[
\Pr\left[
|\Gamma_{ijk}\Delta\Gamma'_{ijk}|\ge 44\times\frac{f_{ij}f_{ik}f_{jk}}{r_i^2r_jr_k}\right]=O\left(\frac{1}{n^{100}}\right).
\]
\end{lemma}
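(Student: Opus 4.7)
The plan is to first identify the structural source of $\Gamma_{ijk} \Delta \Gamma'_{ijk}$, then bound its size using two applications of the hypergeometric concentration in Lemma~\ref{lemma:HGtail}.

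First, I would observe that since $V'_i = (V_i \setminus \{u\}) \cup \{u'\}$, and since $F'_{ij}$ (resp.\ $F'_{ik}$) is obtained from $F_{ij}$ (resp.\ $F_{ik}$) by renaming every pair $(u,\cdot)$ to $(u',\cdot)$ while leaving all other pairs unchanged, the only triples that can differ between $\Gamma_{ijk}$ and $\Gamma'_{ijk}$ have first coordinate $u$ (contributing to $\Gamma_{ijk}\setminus\Gamma'_{ijk}$) or $u'$ (contributing to $\Gamma'_{ijk}\setminus\Gamma_{ijk}$). Moreover, the renaming is a bijection between the $u$-fan in $F_{ij}$ and the $u'$-fan in $F'_{ij}$ (likewise for $F_{ik}$), and $F_{jk}=F'_{jk}$. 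Hence if we set
\[
X_u = \left|\{(b,c)\in V_j\times V_k : (u,b)\in F_{ij},\ (u,c)\in F_{ik},\ (b,c)\in F_{jk}\}\right|,
\]
then the number of triples in $\Gamma'_{ijk}$ with first coordinate $u'$ equals $X_u$, so $|\Gamma_{ijk}\Delta \Gamma'_{ijk}| = 2X_u$, and it suffices to prove $\Pr[X_u \ge 22\beta] = O(1/n^{100})$, writing $\beta = f_{ij}f_{ik}f_{jk}/(r_i^2 r_j r_k)$.

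Next, I would control the two ``fan-out'' sets $N_{ij}(u) = \{b : (u,b)\in F_{ij}\}$ and $N_{ik}(u) = \{c : (u,c)\in F_{ik}\}$. Each is hypergeometrically distributed ($HG(r_ir_j, r_j, f_{ij})$ and $HG(r_ir_k, r_k, f_{ik})$) with means $f_{ij}/r_i$ and $f_{ik}/r_i$, respectively. Applying Lemma~\ref{lemma:HGtail}(1) with $\delta = 1$ and taking the union bound shows that the event
$\mathcal{E}=\{|N_{ij}(u)|\le 2f_{ij}/r_i\ \text{and}\ |N_{ik}(u)|\le 2f_{ik}/r_i\}$
fails with probability $\exp(-\Omega(f_{ij}/r_i))+\exp(-\Omega(f_{ik}/r_i))$, which is super-polynomially small by admissibility ($f_{ij}/r_i,\ f_{ik}/r_i > n^{\gamma}$).

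Finally, I would condition on $\mathcal{E}$ and on $F_{ij}, F_{ik}$, and exploit the independence of $F_{jk}$ to analyze $X_u$. On $\mathcal{E}$, the ``rectangle'' $N_{ij}(u)\times N_{ik}(u)\subseteq V_j\times V_k$ has size $A \le 4 f_{ij}f_{ik}/r_i^2$, and $X_u$ counts how many of the $f_{jk}$ uniformly chosen pairs of $F_{jk}$ fall into this rectangle, so $X_u\sim HG(r_jr_k, A, f_{jk})$ with mean $\mu \le 4\beta$. Since $22\beta/\mu \ge 5.5 > 2e$, Lemma~\ref{lemma:HGtail}(3) applies with $1+\delta = 22\beta/\mu$, yielding $\Pr[X_u > 22\beta] < (1/2)^{22\beta}$. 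The main obstacle is making this last bound polynomially small in every parameter regime: it gives $O(1/n^{100})$ only once $22\beta \gtrsim 100\log_2 n$. For very small $\beta$ (in particular when $22\beta < 1$), the event $\{X_u\ge 22\beta\}$ reduces to $\{X_u\ge 1\}$ and a direct Markov argument on the hypergeometric yields $\Pr[X_u\ge 1]\le \mu \le 4\beta$; combined (via a short case split using the admissibility conditions, and if needed a small additive $O(\log n)$ term analogous to the one appearing in Lemma~\ref{lemma:construction}) with the tail estimate above, this reaches the claimed $O(1/n^{100})$ bound.
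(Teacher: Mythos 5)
Your argument is correct and arrives at the paper's bound, but by a genuinely different decomposition. The paper also reduces the symmetric difference to the two slices $A$ (triples with first coordinate $u$) and $B$ (first coordinate $u'$), but it bounds them separately and conditions in the order $F_{ij}$, then $F_{jk}$, then $F_{ik}$: it first bounds $|C|=|\{v:(u,v)\in F_{ij}\}|\le 2f_{ij}/r_i$, then for each $v\in C$ bounds $|C(v)|=|\{w:(v,w)\in F_{jk}\}|\le 2f_{jk}/r_j$ and applies Lemma~\ref{lemma:HGtail}(3) to $|C'(v)|=|\{w\in C(v):(u,w)\in F_{ik}\}|$, finishing with a union bound over the $O(f_{ij}/r_i)$ vertices of $C$. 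You instead (i) exploit the exact identity $|B|=|A|$ coming from the bijective renaming $u\mapsto u'$, which replaces the paper's ``similarly'' step, and (ii) condition on $F_{ij}$ and $F_{ik}$ first, so that $X_u=|A|$ becomes a single hypergeometric variable $HG(r_jr_k,\,|N_{ij}(u)|\,|N_{ik}(u)|,\,f_{jk})$ over the product rectangle; one application of Lemma~\ref{lemma:HGtail}(3) with $1+\delta=22\beta/\mu\ge 22/4=5.5>2e$ then finishes, with no union bound over a fan. Both routes use the same two tail bounds and produce the same constant $44$; yours is slightly cleaner.

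The one blemish is the hedge in your final sentences: the ``very small $\beta$'' case cannot arise, and you are not free to add an $O(\log n)$ additive term, since that would weaken the statement being proved. Writing $\beta=f_{ij}f_{ik}f_{jk}/(r_i^2r_jr_k)=(f_{ij}/r_i)\cdot\bigl(f_{ik}f_{jk}/(r_ir_jr_k)\bigr)$, admissibility gives both factors larger than $n^{\gamma}$, hence $\beta>n^{2\gamma}$ and $(1/2)^{22\beta}$ is already superpolynomially small; the Markov fallback and case split are never needed. (The paper's own proof relies on the same admissibility facts via the factor $2^{-11f_{jk}f_{ik}/(r_ir_jr_k)}$.)
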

\begin{proof}
Let us write
\begin{align*}
A&=\{(u,v,w)\in\{u\}\times V_j\times V_k\:|\: (u,v)\in F_{ij} \textrm{, }(u,w)\in F_{ik}\textrm{ and }(v,w)\in F_{jk}\},\\
B&=\{(u',v,w)\in\{u'\}\times V_j\times V_k\:|\: (u',v)\in F'_{ij} \textrm{, }(u',w)\in F'_{ik}\textrm{ and }(v,w)\in F_{jk}\},
\end{align*}
and note that 
\[
|\Gamma_{ijk}\Delta\Gamma'_{ijk}|=|A|+|B|.
\]

Consider the set $C=\{v\in V_j\:|\:(u,v)\in F_{ij}\}$.
When $R_{t_1}$ is taken uniformly at random from 
$\Omega_{t_1}=\{T\subseteq \{1,\ldots,r_ir_j\}\mid |T|=f_{ij}\}$ 
(recall that $s_{t_1}=\{i,j\}$), 
the quantity $|C|$ has hypergeometric distribution 
$HG(r_ir_j,r_j,f_{ij})$.
By Lemma \ref{lemma:HGtail}(1), we have
\[
\Pr\left[|C|\ge 2 \frac{f_{ij}}{r_i}\right]\le \exp(-\frac{1}{3}\times \frac{f_{ij}}{r_i}).
\]
Let us fix $C$ and, for any $v\in C$, write
\begin{align*}
C(v)&=\{w\in V_k\:|\: (v,w)\in F_{jk}\}.
\end{align*}
When $R_{t_6}$ is taken uniformly at random from 
$\Omega_{t_6}=\{T\subseteq \{1,\ldots,r_jr_k\}\mid |T|=f_{jk}\}$ 
(recall that $s_{t_6}=\{j,k\}$), the 
quantity $|C(v)|$ has  hypergeometric distribution 
$HG(r_jr_k,r_k,f_{jk})$.
By Lemma \ref{lemma:HGtail}(1), we have
\begin{equation}\label{claim:eq1}
\Pr\left[|C(v)|\ge 2 \frac{f_{jk}}{r_j}\right]\le \exp(-\frac{1}{3}\times \frac{f_{jk}}{r_j}).
\end{equation}
Let us fix $C(v)$ and write
\begin{align*}
C'(v)&=\{w\in C(v)\:|\: (u,w)\in F_{ik}\}.
\end{align*}
When $R_{t_2}$ is taken 
uniformly at random from 
$\Omega_{t_2}=\{T\subseteq \{1,\ldots,r_ir_k\}\mid |T|=f_{ik}\}$ 
(recall that $s_{t_2}=\{i,k\}$), the quantity 
$|C'(v)|$ has hypergeometric distribution 
$HG(r_ir_k,|C(v)|,f_{ik})$.
Under the hypothesis $|C(v)|\le 2 \frac{f_{jk}}{r_j}$, 
we can 
apply Lemma \ref{lemma:HGtail}(3)
with $\delta=\frac{11f_{jk}/r_j}{|C(v)|}-1>2e-1$
to evaluate the size of $C'(v)$. 
The union bound then gives
\[
\Pr\left[|C'(v)|\le 11\frac{f_{jk}f_{ik}}{r_ir_jr_k}\right]\ge 1- 2^{-11\frac{f_{jk}f_{ik}}{r_ir_jr_k}}
-\exp(- \frac{f_{jk}}{3r_j}).
\]
Finally, note that $|A|=\sum_{v\in C}|C'(v)|$. Thus the union bound gives
\[
\Pr\left[|A|\le 22\frac{f_{ij}f_{ik}f_{jk}}{r^2_ir_jr_k}\right]\ge 1-\exp(-\frac{f_{ij}}{3r_i})-
2\frac{f_{ij}}{r_i}\left(
2^{-11\frac{f_{jk}f_{ik}}{r_ir_jr_k}}
+\exp(- \frac{f_{jk}}{3r_j})
\right).
\]
Similarly, we have
\[
\Pr\left[|B|\le 22\frac{f_{ij}f_{ik}f_{jk}}{r^2_ir_jr_k}\right]\ge 1-\exp(-\frac{f_{ij}}{3r_i})-
2\frac{f_{ij}}{r_i}\left(
2^{-11\frac{f_{jk}f_{ik}}{r_ir_jr_k}}
+\exp(- \frac{f_{jk}}{3r_j})
\right),
\]
and thus 
\[
\Pr\left[
|\Gamma_{ijk}\Delta\Gamma'_{ijk}|\ge 44\times\frac{f_{ij}f_{ik}f_{jk}}{r_i^2r_jr_k}\right]\le
2\exp(-\frac{f_{ij}}{3r_i})+
4\frac{f_{ij}}{r_i}\left(
2^{-11\frac{f_{jk}f_{ik}}{r_ir_jr_k}}
+\exp(- \frac{f_{jk}}{3r_j})
\right),
\]
which is exponentially small since this set of parameters is admissible.
\end{proof}

Lemmas~\ref{lm:ApproxUnitary} and \ref{claim1} 
then show that the mapping
\begin{align*}
\ket{R_{t_4}}&\ket{R_{t_5}}\sum_{R_{t_1}\in \Omega_{t_1}}\sum_{R_{t_2}\in \Omega_{t_2}}\sum_{R_{t_6}\in \Omega_{t_6}}\sum_{R_{t_3}\in \Omega_{t_3}}
\ket{R_{t_1}}\ket{R_{t_2}}\ket{R_{t_6}}\ket{R_{t_3}}\ket{E_{ijk}(R_{t},R_{t_1},\cdots,R_{t_6})}\mapsto\\
&\hspace{-3mm}
\ket{R_{t_4}}\ket{R_{t_5}}\sum_{R_{t_1}\in \Omega_{t_1}}\sum_{R_{t_2}\in \Omega_{t_2}}\sum_{R_{t_6}\in \Omega_{t_6}}\sum_{R_{t_3}\in \Omega_{t_3}}
\ket{R_{t_1}}\ket{R_{t_2}}\ket{R_{t_6}}\ket{R_{t_3}}\ket{E_{ijk}(R'_{t},R_{t_1},\cdots,R_{t_6})}
\end{align*}
can be approximated within inverse polynomial precision 
using 
$
\tilde O(e_{ijk}/r_i+1)
$
queries. This argument is true for all $\{i,j,k\}\in \Sigma_3$, so the update cost is
\[
\mathsf{U}_t=\tilde O\Biggl(1+\sum_{\{j,k\}\textrm{ such that } \{i,j,k\}\in \Sigma_3}\frac{e_{ijk}}{r_i}\Biggr).
\]

Let us finally consider the case where $t_4,t_5,t_6$ are not all larger than $t$.
Whenever $t_6$ is larger than $t$, exactly the same analysis as above holds.
When $t_6$ is smaller than $t$ (which implies that $t_4$ and $t_5$ are also
smaller than $t$), remember that we only need to do the analysis of the update cost under 
the condition that $R_{t_6}$ is marked. This means that we can assume 
that, for any $v\in V_j$, we have $|\{w\in V_k\:|\: (v,w)\in F_{jk}\}|\le 2 f_{jk}/r_j$.
This property can be used instead of Inequality (\ref{claim:eq1}) in the proof
of Lemma \ref{claim1}, and the analysis then becomes the same as above. 

\vspace{2mm}

\noindent{\bf Case 2 \boldmath{[$s_t=\{i,j\}$ with $i<j$]}:} 
$R_t$ and $R'_t$ correspond to two subsets $F_{ij}$ and $F'_{ij}$ that
also differ by exactly one element: let us write $F'_{ij}=(F_{ij}\backslash \{(u,v)\})\cup\{(u',v')\}$. 
For any $\{i,j,k\}\in\Sigma_3$, there exist some $t_1,t_2\in\{1,\ldots,t-1\}$
such that $s_{t_1}=i$, $s_{t_2}=j$ and some $t_3\in\{t+1,\ldots,m\}$ such that 
$s_{t_3}=\{i,j,k\}$. There also exist some
$t_4,t_5,t_6\in\{1,\ldots,m\}$ such that $s_{t_4}=k$, $s_{t_5}=\{i,k\}$ and $s_{t_6}=\{j,k\}$.
Note that $t_4,t_5,t_6$ can be smaller than $t$, but we will first assume
that they are all larger than $t$ (the other cases, which are actually easier to analyze, 
are discussed at the end of the analysis). 
The states $(R_t,R_{t_1},R_{t_2},R_{t_4},R_{t_5},R_{t_6})$ define sets $F_{ij},V_i,V_j,V_k,F_{ik},F_{jk},\Gamma_{ijk}$, while the
states $(R'_t,R_{t_1},R_{t_2},R_{t_4},R_{t_5},R_{t_6})$ define sets $F'_{ij},V_i,V_j,V_k,F_{ik},F_{jk},\Gamma'_{ijk}$.
Given any state $R_{t_3}$,
let $E_{ijk}(R_{t},R_{t_1},\ldots,R_{t_6})$ denote the set of hyperedges to be queried associated with $\Gamma_{ijk}$ and $R_{t_3}$,
and $E_{ijk}(R'_{t},R_{t_1},\ldots,R_{t_6})$ denote the set of hyperedges to be queried associated with $\Gamma'_{ijk}$ and $R_{t_3}$.

By Lemmas~\ref{lm:ApproxUnitary} and \ref{lemma:construction}, 
we know that the mapping
\begin{align*}
\ket{R_{t_4}}&\ket{R_{t_5}}\ket{R_{t_6}}\sum_{R_{t_3}\in\Omega_{t_3}}\ket{R_{t_3}}\ket{E_{ijk}(R_{t},R_{t_1},R_{t_2},R_{t_3},R_{t_4},R_{t_5},R_{t_6})}
\mapsto\\
&\ket{R_{t_4}}\ket{R_{t_5}}\ket{R_{t_6}}\sum_{R_{t_3}\in\Omega_{t_3}}\ket{R_{t_3}}\ket{E_{ijk}(R'_{t},R_{t_1},R_{t_2},R_{t_3},R_{t_4},R_{t_5},R_{t_6})}
\end{align*}
can be approximated within inverse polynomial precision 
using 
\[
\tilde O\biggl(\frac{e_{ijk}|\Gamma_{ijk}\Delta\Gamma'_{ijk}|}{M_{ijk}}+\log n \biggr)=\tilde O\biggl(\frac{e_{ijk}|\Gamma_{ijk}\Delta\Gamma'_{ijk}|}{M_{ijk}}+1\biggr)
\] 
queries. 
We now prove the following lemma.
\begin{lemma}\label{claim2}
When $R_{t_5}$ and $R_{t_6}$ are taken uniformly at random,
\[
\Pr\left[
|\Gamma_{ijk}\Delta\Gamma'_{ijk}|\ge 22\times\frac{f_{ik}f_{jk}}{r_ir_jr_k}
\right]=O\biggl(\frac{1}{n^{100}}\biggr).
\]
\end{lemma}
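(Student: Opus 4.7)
The plan is to exploit the fact that $F_{ij}$ and $F'_{ij}$ differ in exactly one pair, namely $(u,v)$ is replaced by $(u',v')$, so
\[
\Gamma_{ijk}\Delta\Gamma'_{ijk}=A\cup B
\]
where $A=\{(u,v,w):w\in V_k,\ (u,w)\in F_{ik},\ (v,w)\in F_{jk}\}$ and $B$ is defined analogously from $(u',v')$. I will bound $|A|$ and $|B|$ separately by $11f_{ik}f_{jk}/(r_ir_jr_k)$ with probability $1-O(1/n^{100})$ using a two-level hypergeometric analysis. The structure mirrors the proof of Lemma~\ref{claim1}, but only the last two levels of its nested argument are needed, since swapping a single pair (rather than a whole vertex as in Case 1) removes the need to introduce an intermediate set indexed by some $R_{t_1}$.

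For $|A|$, first use that $R_{t_5}$ is uniform: the set $C_{u}=\{w\in V_k:(u,w)\in F_{ik}\}$ has distribution $HG(r_ir_k,r_k,f_{ik})$ with mean $f_{ik}/r_i$, so Lemma~\ref{lemma:HGtail}(1) gives $\Pr[|C_{u}|>2f_{ik}/r_i]\le\exp(-f_{ik}/(3r_i))$, which is exponentially small under admissibility. Conditional on $|C_u|\le 2f_{ik}/r_i$, use that $R_{t_6}$ is uniform: the quantity $|A|=|\{w\in C_u:(v,w)\in F_{jk}\}|$ has distribution $HG(r_jr_k,|C_u|,f_{jk})$ with mean $\mu\le 2f_{ik}f_{jk}/(r_ir_jr_k)$. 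Choosing $\delta$ so that $(1+\delta)\mu=11f_{ik}f_{jk}/(r_ir_jr_k)$, one checks $1+\delta\ge 11/2$ and hence $\delta\ge 9/2>2e-1$, so Lemma~\ref{lemma:HGtail}(3) applies and yields
\[
\Pr\bigl[|A|\ge 11f_{ik}f_{jk}/(r_ir_jr_k)\bigr]\le (1/2)^{11f_{ik}f_{jk}/(r_ir_jr_k)},
\]
again exponentially small since admissibility ensures $f_{ik}f_{jk}/(r_ir_jr_k)>n^{\gamma}$.

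The same argument, with $u',v'$ in place of $u,v$, bounds $|B|$ by $11f_{ik}f_{jk}/(r_ir_jr_k)$ with the same probability guarantee. A union bound over the four failure events (two per term: the conditioning bound on $C_u$ and $C_{u'}$, and the final tail bound on $|A|$ and $|B|$) gives the stated $O(1/n^{100})$ probability that $|\Gamma_{ijk}\Delta\Gamma'_{ijk}|\ge 22f_{ik}f_{jk}/(r_ir_jr_k)$. The main sanity check, rather than an obstacle, is the verification that the $\delta>2e-1$ hypothesis in Lemma~\ref{lemma:HGtail}(3) holds under the conditioning $|C_u|\le 2f_{ik}/r_i$; this is automatic because the constant $11$ in the target bound exceeds twice the $2f_{ik}/r_i$ used in the conditioning step.
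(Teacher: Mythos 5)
Your proposal is correct and follows essentially the same route as the paper: the same decomposition of $\Gamma_{ijk}\Delta\Gamma'_{ijk}$ into the contributions $A$ and $B$ of the removed pair $(u,v)$ and the added pair $(u',v')$, the same two-level hypergeometric argument (Lemma~\ref{lemma:HGtail}(1) on $|\{w\in V_k:(u,w)\in F_{ik}\}|$ via the randomness of $R_{t_5}$, then Lemma~\ref{lemma:HGtail}(3) with $\delta\ge 9/2>2e-1$ via the randomness of $R_{t_6}$), and the same union bound. The only difference is notational (you phrase $A$ and $B$ as sets of triples rather than sets of $w\in V_k$, which has the same cardinality).
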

\begin{proof}
Let us write
\begin{align*}
A&=\{w\in V_k\:|\:(u,w)\in F_{ik} \textrm{ and }(v,w)\in F_{jk}\},\\
B&=\{w\in V_k\:|\:(u',w)\in F_{ik} \textrm{ and }(v',w)\in F_{jk}\},
\end{align*}
and note that 
\[
|\Gamma_{ijk}\Delta\Gamma'_{ijk}|=|A|+|B|.
\]

Let us write $A_1=\{w\in V_k\:|\:(u,w)\in F_{ik}\}$.
When $R_{t_5}$ is taken uniformly at random from 
$\Omega_{t_5}=\{T\subseteq \{1,\ldots,r_ir_k\}\mid |T|=f_{ik}\}$ 
(recall that $s_{t_5}=\{i,k\}$), 
the quantity $|A_1|$ has hypergeometric distribution $HG(r_ir_k,r_k,f_{ik})$. 
By Lemma \ref{lemma:HGtail}(1), we have
\begin{equation}\label{claim:eq2}
\Pr\left[|A_1|\ge 2 \frac{f_{ik}}{r_i}\right]\le \exp(-\frac{1}{3}\times \frac{f_{ik}}{r_i}).
\end{equation}
Once $A_1$ is fixed, the quantity $|\{w\in A_1\:|\:(v,w)\in F_{jk}\}|$
has hypergeometric distribution $HG(r_jr_k,|A_1|,f_{jk})$ 
with expectation $\frac{|A_1|f_{jk}}{r_jr_k}$.
Under the assumption $|A_1|\le 2\frac{f_{ik}}{r_i}$,
we can apply Lemma \ref{lemma:HGtail}(3)
with $\delta=\frac{11f_{ik}/r_i}{|A_1|}-1>2e-1$. 
The union bound then gives
\[
\Pr\left[|A|\le 11\frac{f_{ik}f_{jk}}{r_ir_jr_k}\right]\ge 1- 2^{-11\frac{f_{ik}f_{jk}}{r_ir_jr_k}}-\exp(-\frac{f_{ik}}{3r_i}).
\]
Similarly we obtain 
\[
\Pr\left[|B|\le 11\frac{f_{ik}f_{jk}}{r_ir_jr_k}\right]\ge 1- 2^{-11\frac{f_{ik}f_{jk}}{r_ir_jr_k}}-\exp(-\frac{f_{ik}}{3r_i}),
\]
and thus 
\[
\Pr\left[
|\Gamma_{ijk}\Delta\Gamma'_{ijk}|\ge 22\frac{f_{ik}f_{jk}}{r_ir_jr_k}\right]\le
2\times \left(2^{-11\frac{f_{ik}f_{jk}}{r_ir_jr_k}}+\exp(-\frac{f_{ik}}{3r_i})\right),
\]
which is exponentially small since 
this set of  parameters is admissible.
\end{proof}

Using Lemma~\ref{lm:ApproxUnitary} and Lemma \ref{claim2},
the mapping
\begin{align*}
\ket{R_{t_4}}&\sum_{R_{t_5}\in \Omega_{t_5}}\sum_{R_{t_6}\in \Omega_{t_6}}\sum_{R_{t_3}\in \Omega_{t_3}}
\ket{R_{t_5}}\ket{R_{t_6}}\ket{R_{t_3}}\ket{E_{ijk}(R_{t},R_{t_1},\cdots,R_{t_6})}\mapsto\\
&
\ket{R_{t_4}}\sum_{R_{t_5}\in \Omega_{t_5}}\sum_{R_{t_6}\in \Omega_{t_6}}\sum_{R_{t_3}\in \Omega_{t_3}}
\ket{R_{t_5}}\ket{R_{t_6}}\ket{R_{t_3}}\ket{E_{ijk}(R'_{t},R_{t_1},\cdots,R_{t_6})}
\end{align*}
can be approximated within inverse polynomial precision 
using 
$
\tilde O(e_{ijk}/f_{ij}+1)
$
queries. This argument is true for all $\{i,j,k\}\in \Sigma_3$, so the update cost is
\[
\mathsf{U}_t=\tilde O\Biggl(1+\sum_{k\textrm{ such that }\{i,j,k\}\in \Sigma_3}\frac{e_{ijk}}{f_{ij}}\Biggr).
\]

Let us finally consider the case where $t_4,t_5,t_6$ are not all larger than $t$.
Whenever both $t_5$ and $t_6$ are larger than $t$, exactly the same analysis as above holds.
When $t_5<t<t_6$, remember that we only need to do the analysis of the update cost under 
the condition that $R_{t_5}$ is marked. This means that we can assume 
that, for any $u\in V_i$, we have $|\{w\in V_k\:|\: (u,w)\in F_{ik}\}|\le 2 f_{ik}/r_i$.
This property can be used instead of Inequality~(\ref{claim:eq2}) in the proof
of Lemma \ref{claim2}, and the analysis then becomes the same as above. 
When $t_6<t<t_5$, the same argument holds by inverting the roles 
of $\{i,k\}$ and $\{j,k\}$ in the proof of Lemma~\ref{claim2}. 
When $t_5,t_6<t$, the fact that $R_{t_5}$ and $R_{t_6}$ are marked 
(more precisely, item (d) in the definition of marked states of Section~\ref{subsection:marked})
implies that
for any $(u_1,v_1)\in V_i\times V_j$,
$|\{w\in V_k\:|\:(u_1,w)\in F_{ik} \textrm{ and }(v_1,w)\in F_{jk}\}|\le 11\frac{f_{ik}f_{jk}}{r_ir_jr_k}$,
which immediately implies that 
$|\Gamma_{ijk}\Delta\Gamma'_{ijk}|\le 22\frac{f_{ik}f_{jk}}{r_ir_jr_k}$.
\vspace{2mm}

\noindent{\bf Case 3 \boldmath{[$s_t=\{i,j,k\}$ with $i<j<k$]}:} 
$R_t$ and $R'_t$ are two subsets of $\{1,\ldots,M_{ijk}\}$, both of size $e_{ijk}$,
differing by exactly one element. The corresponding $E_{ijk}$ and $E'_{ijk}$ are subsets of 
the same $\Gamma_{ijk}$, and have symmetric difference $|E_{ijk}\Delta E'_{ijk}|\le 2$, so $\mathsf{U_t}\le 2$.
\vspace{2mm}

Now the proof of Theorem \ref{th:main} is completed.

\section{Applications: 4-clique detection and ternary associativity testing
}\label{sec:applications}
In this section we describe how to use our method to construct efficient 
quantum algorithms for 4-clique detection and ternary associativity testing.

First, by applying Theorem \ref{th:main} to the case where $H$ is a $4$-clique,
and optimizing both the loading schedule and the parameters, we obtain the following 
result. 
\begin{theorem}\label{th:4-clique}
There exists a quantum algorithm that detects if
a 3-uniform hypergraph on $n$ vertices has a 4-clique, 
with high probability, 
using $\tilde O(n^{241/128})=O(n^{1.883})$ queries. 
\end{theorem}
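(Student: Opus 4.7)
The plan is to apply Theorem~\ref{th:main} with $H$ taken to be the $4$-clique $K_4^{(3)}$, whose vertex set I identify with $\Sigma_1=\{1,2,3,4\}$. Then $\Sigma_2$ contains all six $2$-subsets of $\Sigma_1$ and $\Sigma_3$ all four $3$-subsets, so every valid loading schedule has length at least~$14$: four vertex loads, six pair loads, and four triple loads. The proof reduces to choosing a valid loading schedule together with an admissible assignment of parameters so that the complexity predicted by Theorem~\ref{th:main} equals $\tilde O(n^{241/128})$.

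Guided by the $S_4$-symmetry of $K_4^{(3)}$, I would start from symmetric parameters $r_i=n^\alpha$, $f_{ij}=n^\beta$ and $e_{ijk}=n^\gamma$, with real exponents $\alpha,\beta,\gamma$ left free. The admissibility conditions then become the linear inequalities $0<\alpha<1$, $\beta\le 2\alpha$ and $0<\gamma\le 3\beta-3\alpha$. Under this choice, the per-level factor $1/\sqrt{\varepsilon_r}$ equals $n^{(1-\alpha)/2}$, $n^{\alpha-\beta/2}$ or $n^{(3\beta-3\alpha-\gamma)/2}$ at a vertex, pair or triple level, each $\ge 1$ by admissibility; the local factor $(1/\sqrt{\delta_t})\mathsf{U}_t$ equals $n^{\alpha/2}(1+n^{\gamma-\alpha})$, $n^{\beta/2}(1+n^{\gamma-\beta})$ or $n^{\gamma/2}$ respectively; and the setup cost is $\tilde O(n^\gamma)$.

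Since the cumulative product $\prod_{r\le t}1/\sqrt{\varepsilon_r}$ is non-decreasing in~$t$, within each block of same-type levels the cost is maximized at the final level. The overall complexity therefore collapses to the maximum of $n^\gamma$ and three explicit monomials coming from the final vertex, pair, and triple loads, each a linear function of $(\alpha,\beta,\gamma)$; after a case split according to the signs of $\gamma-\alpha$ and $\gamma-\beta$ (which determines which term of each $\mathsf{U}_t$ dominates), minimizing the maximum becomes a small piecewise-linear program. For instance, in the regime $\gamma>\max(\alpha,\beta)$ with the naive ``all vertices, then all pairs, then all triples'' schedule the program decouples: $\alpha=4/5$, $\beta=52/35$ and $\gamma=68/35$, giving the suboptimal bound $\tilde O(n^{68/35})$. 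To reach the sharper bound $\tilde O(n^{241/128})$ I would enrich the optimization space, either by breaking the parameter symmetry (allowing, e.g., distinct values of $e_{ijk}$ across triples or of $f_{ij}$ across pairs, since the setup $\sum e_{ijk}$ is controlled by the maximum while the last-triple cost depends on the geometric mean of earlier $e_{ijk}$'s) or by using a non-naive loading schedule that shifts where intermediate accumulated factors appear, and re-solve the resulting linear program.

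The main obstacle is this combined search over schedules and parameter regimes: the set of valid loading schedules for $14$ elements is combinatorially large, the case split in $\mathsf{U}_t$ makes the optimization piecewise, and many natural candidates collapse back to $\tilde O(n^{68/35})$ or worse. The real work is identifying the schedule and parameter regime that tightens all binding constraints simultaneously at exponent~$241/128$; once this optimum is located, verifying admissibility and plugging into Theorem~\ref{th:main} is routine, yielding the claimed algorithm.
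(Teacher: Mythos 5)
Your setup is right and your symmetric computation checks out: with $r_i=n^{4/5}$, $f_{ij}=n^{52/35}$, $e_{ijk}=n^{68/35}$ and the schedule that loads all vertices, then all pairs, then all triples, the four binding terms (setup, last vertex level, last pair level, last triple level) indeed all balance at $n^{68/35}$. But the theorem asserts the exponent $241/128\approx 1.883$, which is strictly smaller than $68/35\approx 1.943$, and your write-up stops exactly where that exponent would have to be produced: you never exhibit a loading schedule together with an admissible parameter assignment whose evaluated cost is $\tilde O(n^{241/128})$, and you explicitly defer ``the real work'' of locating the optimum. As it stands, the argument proves only the weaker bound $\tilde O(n^{68/35})$, so there is a genuine gap --- the statement is an explicit numerical claim, and without a witness (or a solved linear program) the claimed exponent is not established.

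For what it is worth, of your two proposed remedies the first is the right one and the second is unnecessary: the schedule used in the paper is precisely the naive one you already wrote down,
$(1,2,3,4,\{1,2\},\{1,3\},\{1,4\},\{2,3\},\{2,4\},\{3,4\},\{1,2,3\},\{1,2,4\},\{1,3,4\},\{2,3,4\})$,
and the entire gain over $n^{68/35}$ comes from breaking the $S_4$-symmetry of the parameters. The optimizing assignment is
$r_1=n^{1/2}$, $r_2=n^{3/4}$, $r_3=n^{7/8}$, $r_4=n^{3/4}$;
$f_{12}=f_{13}=n^{5/4}$, $f_{14}=n^{147/128}$, $f_{23}=n^{193/128}$, $f_{24}=n^{83/64}$, $f_{34}=n^{181/128}$;
$e_{123}=n^{241/128}$, $e_{124}=n^{217/128}$, $e_{134}=n^{211/128}$, $e_{234}=n^{193/128}$.
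With these in hand, checking admissibility and evaluating the sum in Theorem~\ref{th:main} term by term is exactly the routine verification you describe; your proposal needs to be completed by either quoting such a witness or actually solving the piecewise-linear program you set up.
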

\begin{proof}
We use Theorem \ref{th:main}. Among the $1680384$ possible valid loading schedules, 
we found, 
by numerical search, that the best schedule is
\[
(1,2,3,4,\{1,2\},\{1,3\},\{1,4\},\{2,3\},\{2,4\},\{3,4\},\{1,2,3\},\{1,2,4\},\{1,3,4\}, \{2,3,4\}).
\]
The complexity of the algorithm for this schedule is minimized by the following  values of parameters: 
\begin{align*}
&r_1=n^{1/2},  &  &r_2=n^{3/4}, &  &r_3=n^{7/8},  &         &r_4=n^{3/4}, \\
&f_{12}=n^{5/4}, & &f_{13}=n^{5/4}, & &f_{14}=n^{147/128}, & & \\
&f_{23}=n^{193/128}, & &f_{24}=n^{83/64}, & &f_{34}=n^{181/128}, & & \\
&e_{123}=n^{241/128}, & &e_{124}=n^{217/128}, & &e_{134}=n^{211/128}, & &e_{234}=n^{193/128}.
\end{align*}
It is easy to check that this set of parameters is admissible.
This gives query complexity $\tilde O(n^{241/128})$.
\end{proof}

Next, we consider ternary associativity testing.
Let $X$ be a finite set with ${|X|=n}$. 
A {ternary operator} ${\cal F}$ from ${X\times X\times X}$ to $X$ 
is said to be {\em associative} if ${\cal F}({\cal F}(a,b,c),d,e)={\cal F}(a,{\cal F}(b,c,d),e)
={\cal F}(a,b,{\cal F}(c,d,e))$ holds for every 5-tuple $(a,b,c,d,e)\in X^5$.
The function ${\cal F}$ is given as a black-box: when we make a query $(a,b,c)$ 
to ${\cal F}$, the answer ${\cal F}(a,b,c)$ is returned. 
We can show that 
that the property
``$\cal F$ is not associative" has a certificate corresponding to a sub-hypergraph of 
seven vertices in a 3-uniform directed hypergraph
with each edge weighted by an element in $X$.
By applying Theorem \ref{th:main}
with adaptations to directed hypergraphs with non-binary hyperedge weights,
we obtain the following result.

\begin{theorem}\label{thm:associativity}
There exists a quantum algorithm that determines if ${\cal F}$ is associative 
with high probability using $\tilde O(n^{169/80})=\tilde O(n^{2.1125})$ queries. 
\end{theorem}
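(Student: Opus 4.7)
The plan is to reduce ternary associativity testing to a sub-hypergraph finding problem on a 3-uniform directed hypergraph with edges labeled by elements of $X$, and then invoke an appropriately adapted version of Theorem~\ref{th:main}.

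\textbf{Step 1: The certificate.} First I would identify a fixed constant-size witness $H$ for non-associativity. Observe that $\mathcal{F}$ fails to be associative precisely when there exist $a,b,c,d,e\in X$ such that at least one of $\mathcal{F}(\mathcal{F}(a,b,c),d,e)$, $\mathcal{F}(a,\mathcal{F}(b,c,d),e)$, $\mathcal{F}(a,b,\mathcal{F}(c,d,e))$ differs from the others. Introducing the intermediate symbols $p=\mathcal{F}(a,b,c)$ and $q=\mathcal{F}(b,c,d)$ (for the inequality between the first two expressions; the other two are symmetric), the witness becomes a 7-vertex configuration $(a,b,c,d,e,p,q)$ together with four directed, labeled hyperedges, namely $(a,b,c)\mapsto p$, $(b,c,d)\mapsto q$, $(p,d,e)\mapsto \alpha$, $(a,q,e)\mapsto \beta$, with the constraint $\alpha\neq\beta$. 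Non-associativity is equivalent to the presence of such a witness, or one of the two analogous 7-vertex witnesses, in the directed labeled hypergraph on vertex set $X$ whose edge labels are given by $\mathcal{F}$.

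\textbf{Step 2: Adapting the framework.} Next I would transcribe the machinery of Section~4 to the directed, $X$-labeled setting. The database now records, for each ordered triple of vertices in the candidate sets $V_i\times V_j\times V_k$, the full value of $\mathcal{F}$ instead of the single bit $\chi$; this affects neither the setup cost nor the query count (still one query per triple). The walk spaces, the sets $\Gamma_{ijk}$, the marked-state conditions (in particular item (a) of Case~2 and the innermost-level condition, which now ask that the retrieved labels realize the required equalities/inequalities of the witness), the analyses of $\mathsf{S}$, $\delta_t$, $\varepsilon_t$ and $\mathsf{U}_t$, and the key concentration Lemmas~\ref{lemma:construction}, \ref{lemma:balanced}, \ref{claim1} and \ref{claim2} all depend only on hypergeometric tail bounds over vertex/pair/triple sets, so they transfer with purely notational changes. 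The outcome is a complexity bound of exactly the form stated in Theorem~\ref{th:main}, specialized to the 7-vertex witness $H$ (with its four hyperedges).

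\textbf{Step 3: Optimizing the loading schedule.} Finally, I would exhaust, by numerical search, the valid loading schedules for $H$ and, for each, solve the associated geometric optimization that minimizes $\mathsf{S}+\sum_t(\prod_{r\le t}\varepsilon_r^{-1/2})\delta_t^{-1/2}\mathsf{U}_t$ subject to the admissibility constraints on $r_i$, $f_{ij}$, $e_{ijk}$. One then verifies that the minimizing parameters satisfy admissibility, and reads off the best attained exponent, which is $169/80$. The main obstacle is purely combinatorial/computational: the number of valid schedules on 7 vertices with 4 hyperedges is large, and admissibility at the optimum must be checked explicitly; conceptually, however, the proof is a direct instantiation of Theorem~\ref{th:main} once Steps~1 and 2 are in place.
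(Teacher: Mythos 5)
Your proposal follows essentially the same route as the paper: encode a violation of (one of the two pairwise) associativity identities as a 7-vertex certificate with four directed, $X$-labeled hyperedges, observe that the nested-walk framework of Theorem~\ref{th:main} carries over to directed, non-binary-weighted hypergraphs with only notational changes, and then numerically optimize the loading schedule and parameters. The only thing the paper adds beyond your outline is the explicit witness for the optimization (the concrete schedule and the values of $r_i$, $f_{ij}$, $e_{ijk}$ achieving exponent $169/80$), which your plan defers to the numerical search but would need to be exhibited and checked for admissibility to complete the argument.
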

\begin{proof}
To apply Theorem~\ref{th:main}, we basically follow the approach of Ref.~\cite{Lee+SODA13} 
for the (binary) associativity testing. If ${\cal F}$ is not associative, there is a 5-tuple 
$(a_1,a_2,a_3,a_4,a_5)\in X^5$ such that (i) ${\cal F}({\cal F}(a_1,a_2,a_3),a_4,a_5)
\neq {\cal F}(a_1,{\cal F}(a_2,a_3,a_4),a_5)$ 
or (ii)  ${\cal F}(a_1,{\cal F}(a_2,a_3,a_4),a_5)\neq {\cal F}(a_1,a_2,{\cal F}(a_3,a_4,a_5))$. 
Thus, it suffices to check case (i) and case (ii) individually.

We consider only case (i) since case (ii) is similarly analyzed and needs 
the same query complexity as the algorithm for case (i).
A certificate to case (i) is given by a~7-tuple $(a_1,a_2,\ldots,a_7)\in X^7$ 
such that ${\cal F}(a_1,a_2,a_3)=a_6$, ${\cal F}(a_2,a_3,a_4)=a_7$ 
and ${\cal F}(a_6,a_4,a_5)\neq {\cal F}(a_1,a_7,a_5)$. 
Let $H$ be a directed hypergraph on seven vertices 
with directed hyperedges $(1,2,3),(2,3,4),(6,4,5),(1,7,5)$.   
Then, finding a certificate to case~(i) can be reduced to 
finding a sub-hypergraph isomorphic to $H$ in an $n$-vertex directed hypergraph with each hyperedge weighted with an element in $X$,
to which we will apply Theorem \ref{th:main}.
Note that, although the proof of Theorem \ref{th:main} assumes the given hypergraph is undirected
and each hyperedge is weighted with binary values, 
we can easily adapt the algorithm to handle directed hypergraphs
with non-binary hyperedge weight:
(1) to deal with directed hyperedges of $H$ we simply replace a query 
to the black-box on an unordered triple by a query on the corresponding 
ordered triple
(for instance, for $(u,v,w)\in V_4\times V_5\times V_6$ we 
will query $\chi((w,u,v))$ instead of $\chi(\{u,v,w\})$);
(2) since the quantum walk actually does not use the property
that hyperedges have binary weight, it works without modification 
for the case
of non-binary hyperedge weights as well.
Note also that the resulting algorithm searches $H$ over $X^7$,
so we do not need to consider separately the case of detecting vertex contractions of $H$ as in Ref.~\cite{Lee+SODA13}.

By numerical search, we found the following schedule: 
\begin{align*}
(1,3,4,6,2,5,7,&\{1,2\},\{1,3\},\{1,5\},\{1,7\},\{2,3\},\{2,4\},\{3,4\},\{4,5\},\{4,6\},\\ 
&\{5,6\},\{5,7\},
\{1,2,3\},\{1,5,7\},\{2,3,4\},\{4,5,6\}).
\end{align*} 
The complexity of the algorithm for this schedule is minimized 
by the following values of parameters: 
$r_1=n^{3/4}$, $r_2=n$, $r_3=n$, $r_4=n^{7/8}$, $r_5=n^{1/2}$, $r_6=n$, $r_7=n$;
$f_{12}=n^{7/4}$, $f_{13}=n^{7/4}$, $f_{15}=n^{5/4}$, $f_{17}=n^{7/4}$, $f_{23}=n^{23/16}$, 
$f_{24}=n^{29/16}$, $f_{34}=n^{15/8}$, $f_{45}=n^{11/8}$, $f_{46}=n^{15/8}$, $f_{56}=n^{3/2}$,
$f_{57}=n^{3/2}$;
$e_{123}=n^{169/80}$, $e_{157}=n^{169/80}$, $e_{234}=n^{169/80}$ and $e_{456}=1$. 
It is easy to check that this set of parameters is admissible.
This gives query complexity $\tilde O(n^{169/80})$. 
\end{proof}

\section*{Acknowledgments}
The authors are grateful to Tsuyoshi Ito, Akinori Kawachi, Hirotada Kobayashi,
Masaki Nakanishi, Masaki Yamamoto and Shigeru Yamashita for helpful discussions.
This work is supported by the Grant-in-Aid for Scientific Research~(A)~No.~24240001 of the JSPS
and the Grant-in-Aid for Scientific Research on Innovative Areas~No.~24106009 of
the MEXT in Japan.

\end{document}